\newtheorem{lemma}{Lemma}
\newtheorem{assumption}{Assumption}
\newtheorem{proposition}{Proposition}
\newtheorem{definition}{Definition}
\newtheorem{remark}{Remark}
 \def\independenT#1#2{\mathrel{\rlap{$#1#2$}\mkern2mu{#1#2}}}
\newcommand\indep{\protect\mathpalette{\protect\independenT}{\perp}}
\newcommand\posscite[1]{\citeauthor{#1}'s (\citeyear{#1})}
\newcommand{\Cov}{\mathrm{Cov}}
\newcommand{\II}{\boldsymbol{I}}
\newcommand{\XX}{\boldsymbol{X}}
\newcommand{\BB}{\boldsymbol{B}}
\newcommand{\MM}{\boldsymbol{M}}
\newcommand{\QQ}{\boldsymbol{Q}}
\newcommand{\aaa}{\boldsymbol{a}}
\newcommand{\bb}{\boldsymbol{b}}
\newcommand{\ee}{\boldsymbol{e}}
\newcommand{\vv}{\boldsymbol{v}}
\newcommand{\xx}{\boldsymbol{x}}
\newcommand{\sss}{\boldsymbol{s}}
\newcommand{\ww}{\boldsymbol{w}}
\newcommand{\uu}{\boldsymbol{u}}
\newcommand{\zz}{\boldsymbol{z}}
\newcommand{\eeta}{\boldsymbol{\eta}}
\newcommand{\Pv}{\boldsymbol{P}_v}
\newcommand{\Px}{\boldsymbol{P}_x}
\newcommand{\Qv}{\boldsymbol{Q}_v}
\newcommand{\Qx}{\boldsymbol{Q}_x}
\newcommand{\PP}{\boldsymbol{P}}
\newcommand{\mmu}{\boldsymbol{\mu}}
\newcommand{\br}[1]{\left({#1} \right)}
\newcommand{\brs}[1]{\left[{#1} \right]}
\newcommand{\E}{\mathrm{E}}
\newcommand{\Var}{\mathrm{Var}}
\title{Online Balanced Experimental Design}
\author{
  David Arbour* \\
  Adobe Research\\
  345 Park Avenue\\
  San Jose, CA \\
  \texttt{darbour26@gmail.com} \\
   \And
 Drew Dimmery* \\
  Data Science @ University of Vienna\\ Vienna, AT\\
  \texttt{pddimmery@gmail.com} \\
  \And
  Tung Mai\\
  Adobe Research\\
  345 Park Avenue\\
  San Jose, CA \\
  \texttt{tumai@adobe.com} \\
  \And
  Anup Rao\\
  Adobe Research\\
  345 Park Avenue\\
  San Jose, CA \\
  \texttt{anuprao@adobe.com} \\
}
\begin{document}

\maketitle

\begin{abstract}
    We consider the experimental design problem in an online environment, an important practical task for reducing the variance of estimates in randomized experiments which allows for greater precision, and in turn, improved decision making. 
  In this work, we present algorithms that build on recent advances in online discrepancy minimization which accommodate both arbitrary treatment probabilities and multiple treatments. 
  The proposed algorithms are computational efficient, minimize covariate imbalance, and include randomization which enables robustness to misspecification.
  We provide worst case bounds on the expected mean squared error of the causal estimate and show that the proposed estimator is no worse than an implicit ridge regression, which are within a logarithmic factor of the best known results for offline experimental design.
  We conclude with a detailed simulation study showing favorable results relative to complete randomization as well as to offline methods for experimental design with time complexities exceeding our algorithm.
\end{abstract}

\section{Introduction}
Randomized experimentation is a fundamental tool for obtaining counterfactual estimates. 
The efficacy of randomization comes from a very simple intuition--by randomly assigning treatment status dependence between observed (and unobserved) treatment and pre-treatment covariates necessarily tends to zero as a function of the number of units.
In the context of experimentation, this independence condition on observed covariates, commonly known as balance~\citep{imai2008misunderstandings, imai2014covariate}, reduces the variance of estimates of the average treatment effect~\citep{greevy2004optimal, higgins2016improving, kallus2018optimal, li2018asymptotic, harshaw2020balancing}.
Under the appropriate conditions such corrections can result in large increases in effective sample size, allowing for the detection of the small effects which are commonplace in many large scale studies and industrial applications~\citep{dimmery2019shrinkage, azevedo2020b}.
These contexts rely heavily on experimentation for decision-making, so reduced variance directly translates into more reliable decisions~\citep{kohavi2012trustworthy}.

Traditional experimental design like blocking~\citep{greevy2004optimal, higgins2016improving} or even the novel Gram-Schmidt Walk design~\citep{harshaw2020balancing} require more than one pass over the sample and their sample complexity is greater than $\mathcal{O}(n)$. 
Even algorithms which admit sequential assignment such as \citet{moore2017blocking} suffer from the fact that the algorithm is not linear time and, thus, respondents late in the experiment may take substantially longer to receive a treatment assignment~\citep{cavaille2018implementing}. 
Our work is motivated by this setting to provide a linear-time, single-pass (i.e. sequential) algorithm for balancing experimental design. 
Our focus is on linear measures of balance (often of particular interest to applied researchers). 
This provides a new avenue through which experimenters can ensure that their experiments optimize the information they gain from costly samples.

We start by presenting four desiderata for effective, practical online experimental design.
First, a method must be computationally efficient.
In a review of existing methods for online assignment in the case of survey experiments, \citet{cavaille2018implementing} finds that existing methods become slow to unusable as increasingly more respondents are included in a study.
This resulting speed is fundamentally incompatible with effective administration of an experiment.
In short, high latency will cause disproportionately large dropoff in an experiment, which may completely nullify the gains from using more sophisticated experimental design.
Any algorithm with greater than linear time complexity exacerbate this problem: higher latency for later subjects than earlier subjects will tend to cause non-random sample attrition, as later subjects (who may be different than earlier respondents) will be more likely to drop out.

Second, an experimental design must reduce covariate imbalance to be effective.
This is the entire justification for using methods more sophisticated than Bernoulli randomization, so if the design is not able to improve on balance, then there will be no subsequent reduction in variance and therefore no compelling reason to use it.

Third, the design must incorporate randomization.
\citet{harshaw2020balancing} provides an extensive discussion on the inherent tradeoff between robustness and balance within experimental design.
A design which solely optimizes for balance will tend to operate on a knife-edge of accidental bias~\citep{efron1971forcing}: the potential bias from an adversarially chosen confounder.
With higher accidental bias than Bernoulli randomization, if units do not arrive precisely i.i.d., then the entire design may be compromised.
Given that experimental settings are prized specifically for their unbiasedness, this could completely undermine any gains from improved balance.
Strong theoretical guarantees on robustness are extremely important in this setting in order to ensure that inferences rest squarely on the design.
If assumptions about sampling procedures or data generating processes are necessary to ensure the reliability of estimation, then inferences are not based solely on properties of the experiment, but rather on factors outside the control of the experimenter~\citep{aronow2021nonparametric}.

Fourth, units which do not show up in the sample should not be included in the balancing.
An offline algorithm used in an online environment would fail this condition, because to use it would require to balance the entire population of units \emph{expected} to show up to an experiment.
Units within that population who fail to show would nevertheless be assigned a treatment.
Depending on the distribution of units who actually show up in the sample, there are no longer any guarantees that balance will obtain.

Our approach satisfies all four of these desiderata.
We propose an online method for covariate balancing requiring linear time and space which provably provides variance reduction. 
Building upon recent work on discrepancy minimization---the self-balancing walk~\citep{alweiss2020discrepancy} and kernel thinning~\citep{dwivedi2021kernel}---we provide an algorithm whose $L_2$-imbalance matches the best known \emph{online} discrepancy algorithm. 
Where $\delta$ is a failure probability, performing this optimization online results in a $\log(n/\delta)$ cost in convergence of the average treatment effect over the offline algorithm of discrepancy minimization by~\citet{harshaw2020balancing}.
A conjectured lower bound for online discrepancy minimization would further reduce this cost by a square root.
We also provide an extension to multiple treatments and non-uniform treatment probabilities.

The rest of the paper is organized as follows.
Section~\ref{sec:related} with a discussion of related work to position our contribution in the literature.
Section~\ref{sec:problem} defines notation and formally introduces the problem.
Section~\ref{sec:method} provides our proposed algorithms and methods.
Section~\ref{sec:simulations} provides a detailed simulation study of the behavior of the proposed algorithms.

\section{Related Work}
\label{sec:related}
There are two common approaches for achieving improved covariate balance in experiments.
The first, and most common especially within industrial settings, approach is to perform a post-hoc regression adjustment which includes pre-treatment covariates~\citep{deng2013improving, lin2013agnostic}.
The second approach is to consider covariate balance during the design phase of the experiment, i.e., explicitly optimizing treatment assignment in order to minimize imbalance between treatment groups~\citep{greevy2004optimal, higgins2016improving, kallus2018optimal}. 

Post-hoc stratification may be seen as asymptotically equivalent in terms of variance reduction to its analogous pre-stratified design as shown by \citet{miratrix2013adjusting}.
\posscite{miratrix2013adjusting} analysis is limited by two factors: it assumes a fixed number of stratification cells (that do not grow with sample size) and it is conditioned on the post-stratification estimator being defined (e.g. treatment and control units within each stratification cell).
These limitations may weaken it's asymptotic equivalence argument.
A key limitation of post-hoc adjustment approaches is that the desire for simplicity and scalability implies that practitioners typically adjust for only a linear function of the pre-treatment covariates.
Indeed, in the common ``CUPED'' approach, adjustment is performed solely on a linear function of a single pre-treatment outcome measurement~\citep{deng2013improving}.
Second, many common approaches for constructing stratification cells (i.e. clustering algorithms) may be computationally infeasible in practice for industrial applications when the number of simultaneous experiments and the number of outcome variables of interest are large.
Third, without sample splitting (or when naively applied) advanced machine-learning based methods for adjustment may slip in assumptions of correct-specification of the outcome model, or have confidence intervals with poor coverage properties.
While cross-fitting may ameliorate some of these problems in larger samples, sample splitting may prove too high a cost when sample sizes are low.

A key shortcoming of design-based covariate balance is the lack of computationally efficient algorithms which provide theoretical guarantees over worst case behavior.
Blocking~\citep{fisher1935design} partitions variables into non-overlapping sets and performs complete randomization within each partition~("blocks").
\citet{higgins2016improving} introduced a computationally approximation of blocking which runs in $\mathcal{O}(n\log(n))$ time. 
\citet{kallus2018optimal, bertsimas2015power} propose an optimization based approach, \citet{kallus2018optimal} additionally considers a partially random approach using semi-definite programming.
\citet{zhou2018sequential} provide a method combining \emph{batch-based} sequential experimentation with rerandomization to achieve balance, but which is not computationally feasible in moderate to large sample sizes.
Perhaps the closest to the current work is \citet{harshaw2020balancing} which propose a balancing design using the Gram-Schmidt walk, an offline method for (linear) discrepancy minimization.
Current state of the art for balancing treatment assignment requires polynomial running time and generally requires knowing all of the covariate vectors prior to determining assignment~\citep{higgins2016improving, harshaw2020balancing,arbour2021efficient}.
As we discuss in section \ref{sec:problem}, this is a non-starter for online treatment assignment.
In this setting, subjects must be allocated as they arrive; it does no good to know how you \emph{should have} assigned a user at the end of the experiment; you need to know when that subject arrives.
It's crucial that when a subject in an experiment arrives they be swiftly allocated to a unit.
Especially in an online environment, high latency will lead to attrition, which may counteract any potential gains from greater efficiency.
Moreover, the users who attrit may be the very subjects of interest~\citep{munger2021accessibility}.
By inducing differential attrition based on patience, the sample in the experiment may differ greatly from the population of interest on unobserved characteristics that make it difficult to extrapolate to a population-level effect~\citep{egami2020elements}.

There is also a variety of methods aimed at sequential, online assignment in experiments.
The seminal work in this literature is \citet{efron1971forcing} which introduced an online variant of complete randomization which aims to ensure that a pre-specified marginal treatment probability is met without introducing too much accidental bias.
\citet{smith1984sequential} provides a generalization of the \citet{efron1971forcing} approach which extends gracefully to multiple treatments.
There are a variety of online balanced coin designs which seek to reduce covariate imbalance~\citep[e.g.][]{baldi2011covariate, moore2017blocking}.
\citet{moore2017blocking} is based around Mahalanobis distance.
As such, it has polynomial time-complexity \emph{at each arrival time}.
In addition to inefficiency, the theoretical worst-case behavior of this algorithm has not been resolved, even in the stochastic setting.
Theoretical guarantees of this sort are paramount in the design setting, as practitioners need to know the credibility of their inferences (and how they may differ from simple Bernoulli randomization).

\section{Background and Problem Setting}
\label{sec:problem}

We first fix notation. 
Random variables will be denoted in upper case, with sets in bold.
The problem setting, which we refer to as experimental treatment allocation, is as follows.
We assume that we observe $1, \dots, n$ i.i.d. observations of $\XX \in \mathbb{R}^{n \times d}$: the covariates\footnote{We assume linear feature maps throughout. We note that nonlinearities can be handled with the same guarantees following \citet{dwivedi2021kernel} at the cost of additional computational complexity.}.
The experimenter is asked to assign a treatment assignment, $A \in \{1, -1\}$ (we will later loosen this to multiple discrete treatment values).
We will refer to the assignments of $A$ as treatment and control, respectively.
Each unit is imbued with \emph{potential outcomes} for each treatment, the value of the outcome if that unit had been assigned to the given group: $y(1)$ for treatment and $y(-1)$ for control.
After assignment we observe only the potential outcome corresponding to the realized treatment assignment, $Y$.
We assume that the outcomes are not available until the conclusion of the experiment.
At the end of the experiment we are interested in measuring the sample average treatment effect~(SATE) between any two treatments, $k$ and $k'$ with the difference in means estimator:
\begin{align}
\label{eq:SATE}
    \hat{\tau}_{kk'} = \frac{1}{n}\sum_i^n \frac{A_i}{p(A_i)} Y_i
\end{align}
where $p(A_i)$ denotes the probability of assigning treatment $A_i$ to instance $i$.
Note that this is simply the difference-in-means rather than the more general Horwitz-Thompson estimator~\citep{horwitz1952generalization}, as the treatment probability is marginal rather than conditional.
More sophisticated estimators are usable in this setting~\citep[e.g.][]{tsiatis2008,aronow2013}, but we will focus on the simplest as we optimize design as is commonplace for studying design~\citep{kallus2017balanced, harshaw2020balancing}.

If propensity scores are constant, the estimator of the SATE given by equation~\ref{eq:SATE} will be unbiased and consistent for its oracle counterpart,
\begin{align}
    \tau_{kk'} = \frac{1}{n}\sum_i^n y_i(k) - y_i(k'),
\end{align}
the difference of potential outcomes of the $k$ and $k'$ treatments.
This SATE is our estimand of interest, as estimated by equation~\ref{eq:SATE}.

We will maintain the following assumptions:
\begin{assumption}[Consistency] $Y_i = y_i(k)$ if $A_i = k \quad \forall i, k$.
\end{assumption}

The problem of experimental allocation is to observe covariate vectors and assign $A$ to units so as to achieve desirable properties of the SATE (for instance, to minimize variance).
In the most general setting where no assumptions are placed the  relationship between the covariates and outcome complete randomization---randomly drawing assignments without respect to background covariates---is known to be minimax optimal~\citep{kallus2018optimal}.

\subsection{Robustness in sequential design}

Experiments are prized for their ability to provide unbiased estimates of causal effects with relatively mild assumptions.
These assumptions, in fact, typically flow from the \emph{design} of the experiment rather than more difficult assumptions about the data used in the course of analysis~\citep{sekhon2009opiates,aronow2021nonparametric}.

In the study of vector balancing, there are three main sampling schemes of interest, listed in order of how adversarial they are: \begin{enumerate*}
\item Stochastic arrivals.
\item Oblivious adversarial arrivals.
\item Fully adversarial arrivals.
\end{enumerate*}

In the stochastic arrivals setting, units are sampled i.i.d. from some fixed (possibly infinite) population. 
As such, a given covariate vector is just as likely to arrive early in the sequence as late.
In the oblivious adversarial setting, the adversary knows the process which will be used to assign units to groups, but cannot condition specifically on those assignments in making its decisions about the order of arrival of units.
This implies the following conditional independence:
\begin{assumption}[Oblivious Adversary]
$$
    \ww_{i-1} \indep \xx_i | \mathcal{H}_{i-1}
$$
where $\mathcal{H}_i$ is the history of $\xx$ up to and including $\xx_i$.
\end{assumption}
This stands in contrast to the fully adversarial case, in which the adversary is able to condition its decision in each period on the full set of past assignments in addition to the history of the covariate vector.
We will focus on the oblivious adversary in this paper.

\section{Online Assignment of Treatments}
\label{sec:method}
\textbf{Weighted Online Discrepancy Minimization:} We consider a variant of the online Koml\'os problem~\citep{spencer1977balancing}, where vectors $\xx_1, \ldots \xx_n$ arrive one by one and must be immediately assigned a weighted sign of either $-2q$ or $2(1-q)$, for $0 < q < 1$, such that the weighed discrepancy $\left\| \sum_{i=1}^n \eta_i \xx_i \right\|_{\infty}$, where $\eta_i$ is the weighted sign given to $\xx_i$, is minimized.
Notice that when $q = 1/2$, the signs become $\pm 1$.

Algorithm \ref{alg:weighted}, takes $i=1,\dots,n$ unit vectors in sequentially and assigns them to a treatment and control, 
represented by the value of $\eta_i$.
The procedure, an extension of recent work in online discrepancy minimization~\citep{alweiss2020discrepancy, dwivedi2021kernel}, assigns treatment with probability proportional to the inner product between a running sum of the signed prior observations.
The algorithm and analysis differs from prior work for discrepancy in two aspects which are necessary for use in experimentation. One is to give a ridge regression guarantee, we need to characterize the random vectors output by the algorithm in terms of the projection matrix $\PP.$ \citet{harshaw2020balancing} do that for an offline discrepancy minimization algorithm, and here we do it for an online version. The other way we differ is that our algorithm is a slight generalization of the one in \citet{dwivedi2021kernel} in that we have a parameter $q$. This allows for the case of imbalanced treatment assignments. A straightforward adoption of the analysis in \citet{dwivedi2021kernel} to this case results in a worse dependence on $1/q.$ Therefore, we derive a sub-exponential concentration bound and get a better dependence on $1/q.$

\begin{algorithm}
\label{alg:main}
\caption{takes each input vector ${\xx}_i$ and assigns it $\{-2q, 2(1-q)\}$ signs online to maintain low weighted discrepancy with probability $1-\delta$.}
\begin{algorithmic}
\STATE {\bfseries Input: } {${\xx}, q$}

$c \gets \min(1/q, 9.3) \log(2n/\delta)$ \\
\FOR {$i$ from 1 to $n$} 
\IF {$|\ww_{i-1}^{\top} {\xx}_i| > c$} 
\STATE ${\ww_{i}} \gets {\ww_{i-1}} - 
2 q \frac{\ww_{i-1}^{\top} {\xx}_i}{c} \xx_i $
\ELSE
\STATE $\eta_i \gets \begin{cases}
     2(1-q), & \text{ w.p. }  q(1 - \ww_{i-1}^{\top} {\xx}_i / c)\\
     -2q,& \text{ w.p. } 
     1- q(1 - \ww_{i-1}^{\top} {\xx}_i / c)
     \end{cases}
$
\STATE ${\ww_{i}} \gets {\ww_{i-1}} + \eta_i {\xx_i}$
\ENDIF

\ENDFOR

\STATE {\bfseries Output: } {$\eeta, \ww$}
\end{algorithmic}

\label{alg:weighted}
\end{algorithm}

{\bf Notation: }Let $\PP_i, \, i\in [n]$ be orthogonal projection matrices onto the span of $\{\xx_1,..,\xx_i\},$ that is, 
\[\PP_i = \XX_i^{\top} \br{\XX_i \XX_i^{\top}}^{-1} \XX_i ,\]
where $\XX_i$ is the $i \times d$ submatrix of $\XX$ corresponding to covariates $\{\xx_1,..,\xx_i\}.$ Let $A = 0.5803, B = 0.4310$ and $ \alpha = 2/B $.

\begin{definition}
[Sub-Gaussian]
A mean zero random variable $X$ is sub-Gaussian with parameter $\sigma$ if for all $\lambda \in \mathbb{R}$,
$
\E [ \exp\br{ \lambda X } ] \leq \exp \br{ \frac{\lambda^2 \sigma^2}{2}}.
$

A mean zero random vector $\ww$ is $(\sigma, \PP)$ sub-Gaussian if for all unit vectors $\uu$ and $\lambda \in \mathbb{R}$,
$
\E [ \exp\br{ \lambda \ww^{\top} \uu } ] \leq \exp \br{ \frac{\lambda^2 \sigma^2 \uu^{\top} \PP \uu }{2}}.
$
In particular, $\ww^{\top} \uu$ is $\sigma^2 \uu^{\top} \PP \uu$ sub-Gaussian.
\end{definition}

\begin{definition}
[Sub-exponential] 
A mean zero random variable $X$ is $(\nu, \alpha)$ sub-exponential if for all $|\lambda| \leq \frac{1}{\alpha}$,
$
\E [ \exp\br{ \lambda X } ] \leq \exp \br{ \frac{\lambda^2 \nu^2}{2}}.
$

A mean zero random vector $\ww$ is $(\nu, \alpha, \PP)$ sub-exponential if for all unit vectors $\uu$ and $|\lambda| \leq \frac{1}{\alpha \sqrt{ \uu^{\top} \PP \uu}}$,
$
\E [ \exp\br{ \lambda \ww^{\top} \uu } ] \leq \exp \br{ \frac{\lambda^2 \nu^2 \uu^{\top} \PP \uu }{2}}.
$
In particular, $\ww^{\top} \uu$ is $(\nu \sqrt{\uu^{\top} \PP \uu}, \alpha \sqrt{\uu^{\top} \PP \uu})$ sub-exponential.
\end{definition}

\begin{restatable}[Main]{theorem}{main}
\label{thm:main}
Let $\ww_1, ... \ww_n$ be as in Algorithm~\ref{alg:main}. Then \begin{enumerate}
    \item $\ww_i$ is mean zero $\br{\sqrt{c/2q}, P_i}$ sub-Gaussian.
    \item $\ww_i$ is mean zero $\br{\sqrt{8Ac}, \alpha, P_i}$ sub-exponential.
    \item With probability $1-\delta,$ for all $i,$  $|\ww_i^T \xx_i| \leq c .$ 
\end{enumerate}
\end{restatable}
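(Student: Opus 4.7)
I would establish parts 1 and 2 simultaneously by induction on $i$, using the tower property conditional on the natural filtration $\mathcal{F}_{i-1} = \sigma(\xx_1,\ldots,\xx_i,\ww_0,\ldots,\ww_{i-1})$, and then derive part 3 from them by a union bound. The base case $\ww_0 = 0$ is immediate. The main structural observation that unifies the two branches of Algorithm~\ref{alg:weighted} is that a direct computation in the random branch gives $\E[\eta_i\mid \mathcal{F}_{i-1}] = -2q\,\ww_{i-1}^{\top}\xx_i/c$, matching the deterministic update; thus in both branches
$$
\E[\ww_i\mid\mathcal{F}_{i-1}] \,=\, \br{\II - \tfrac{2q}{c}\xx_i\xx_i^{\top}}\ww_{i-1}.
$$
Setting $\tilde{\uu}_i = \br{\II - \tfrac{2q}{c}\xx_i\xx_i^{\top}}\uu$ and letting $Z_i$ denote the centered increment (identically zero in the deterministic branch), in either branch
$$
\ww_i^{\top}\uu \,=\, \ww_{i-1}^{\top}\tilde{\uu}_i + Z_i\,(\xx_i^{\top}\uu).
$$
This decomposition is the engine of the proof: the tower property isolates an MGF of the bounded scalar $Z_i$ from a ``projected induction hypothesis'' factor in $\tilde{\uu}_i$.

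For part 1, since $Z_i$ is mean zero and supported on an interval of length $2$, Hoeffding's lemma yields $\E[\exp(\lambda Z_i(\xx_i^{\top}\uu))\mid\mathcal{F}_{i-1}] \leq \exp(\lambda^2(\xx_i^{\top}\uu)^2/2)$. Applying the induction hypothesis to $\ww_{i-1}^{\top}\tilde{\uu}_i$ reduces the inductive step to the deterministic algebraic inequality
$$
\sigma^2\,\tilde{\uu}_i^{\top}\PP_{i-1}\tilde{\uu}_i \,+\, (\xx_i^{\top}\uu)^2 \,\leq\, \sigma^2\,\uu^{\top}\PP_i\uu, \qquad \sigma^2 = \tfrac{c}{2q}.
$$
This is the main obstacle. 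I would handle it by decomposing $\uu$ into components parallel and perpendicular to $\mathrm{span}\{\xx_1,\ldots,\xx_{i-1}\}$ and writing $\PP_i = \PP_{i-1} + \vv_i\vv_i^{\top}$, where $\vv_i$ is a unit vector along $(\II - \PP_{i-1})\xx_i$ (and $\PP_i = \PP_{i-1}$ when that component vanishes). Expansion then reduces the claim to nonnegativity of a quadratic in the resulting scalar parameters, which follows from $2q/c \leq 1$, itself guaranteed by the choice of $c$ together with $q \leq 1/2$.

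Part 2 follows the same scaffolding, but the MGF of $Z_i$ must be controlled more sharply than Hoeffding permits, by exploiting that its conditional variance is $O(q)$ in addition to its bounded range. A direct Bernoulli-MGF computation gives sub-exponential parameters proportional to $\sqrt{q}\,|\xx_i^{\top}\uu|$ and $|\xx_i^{\top}\uu|$, and the same algebraic inequality above closes the induction under the sub-exponential restriction $|\lambda|\leq 1/(\alpha\sqrt{\uu^{\top}\PP_i\uu})$; this sharpening is precisely what turns the $1/q^2$ scaling implicit in part 1 into the $1/q$ scaling reflected in the choice $c = \min(1/q,9.3)\log(2n/\delta)$. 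Part 3 is then an immediate consequence: by the oblivious-adversary assumption $\xx_i \indep \ww_{i-1} \mid \mathcal{H}_{i-1}$, so conditioning on $\xx_i$ the scalar $\ww_{i-1}^{\top}\xx_i$ inherits the sub-Gaussian/sub-exponential parameters of parts 1--2 scaled by $\sqrt{\xx_i^{\top}\PP_{i-1}\xx_i}\leq 1$. A Chernoff tail bound combined with the choice of $c$ then gives per-$i$ failure probability at most $\delta/n$, and a union bound over $i\in[n]$ completes the argument (using that the deterministic branch is a strict contraction in $\xx_i^{\top}\ww$, so it cannot enlarge the relevant inner product).
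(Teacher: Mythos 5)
Your proposal is correct and follows essentially the same route as the paper: the identical decomposition $\ww_i^{\top}\uu = \ww_{i-1}^{\top}\br{\II - \tfrac{2q}{c}\xx_i\xx_i^{\top}}\uu + \epsilon_i\,\xx_i^{\top}\uu$, a tower-property induction reducing to the rank-one Loewner inequality ${\sigma^2}\QQ_{\xx_i,c}\PP_{i-1}\QQ_{\xx_i,c} + \xx_i\xx_i^{\top} \preceq \sigma^2\PP_i$ (the paper's Proposition~\ref{prop:loewner}, proved by the same parallel/perpendicular split you describe), Hoeffding for the sub-Gaussian branch versus a sharper Bernoulli-MGF bound exploiting the $O(q)$ variance for the sub-exponential branch, and concentration plus a union bound over $i\in[n]$ for part 3. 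No gaps worth noting.
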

Note that $\eta_i$ is defined only when $|\ww_{i-1}^{\top} \xx_i| \leq c$. Therefore, $\eeta$ is defined with probability at least $1-\delta.$
\begin{remark}
If $\ww_i$ is a is a mean zero $\br{\sigma, \PP_i}$ sub-Gaussian random vector, then $\Cov(\ww_i) \leq \sigma^2 \PP_i.$ Similarly, we have that if $\ww_i$ is a is a mean zero $\br{\nu, \alpha, P_i}$ sub-exponential random vector, then $\Cov(\ww_i) \leq \frac{3}{2} \nu^2 \PP_i$.
A proof is given in Lemma~\ref{lem:vector-var}.
\end{remark}
We will use Threorem~\ref{thm:main} to derive results on the average treatment effect using the framework developed by \cite{harshaw2020balancing}.

Let $z_i = \eta_i + 2q - 1 \in \{-1, 1\}.$ We will show that $\E[\eta_i] = 0,$ and so we have $\E[z_i] = 2q-1$ and $\zz - \E[\zz] = \eeta.$ Let $ \mmu = \frac{\boldsymbol{Y}(1)}{4 q} + 
\frac{\boldsymbol{Y}(0)}{4(1-q)}.$
\citet{harshaw2020balancing} give a linear algberaic expression for the error of HT-estimators in terms of $\mmu$. In particular, they show 
\begin{lemma}[Lemma A2 and Corollary A1 in \cite{harshaw2020balancing}]
\label{lem:ate-err}
\[
\hat{\tau} - \tau = \frac{2}{n} \br{\zz - \E[\zz]}^{\top} \mmu = \frac{2}{n} \eeta^{\top}\mmu
\]
and hence,
\[ \Var(\hat{\tau}) = \E \brs{ (\hat{\tau} - \tau)^2 } = \frac{4}{n^2} \mmu^{\top} \Cov(\zz) \mmu.
\]
\end{lemma}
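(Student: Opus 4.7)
The plan is to derive both identities by direct algebraic manipulation of the Horvitz--Thompson estimator in~\eqref{eq:SATE}, exploiting the one-to-one correspondence between the treatment assignment $A_i\in\{-1,1\}$ and the signed variable $z_i=\eta_i+2q-1\in\{-1,1\}$, and then passing to the centered process $\eeta=\zz-\E[\zz]$. First I would identify $A_i$ with $z_i$: by construction of Algorithm~\ref{alg:weighted}, $\eta_i = 2(1-q)$ (equivalently $z_i=1$) corresponds to $A_i=1$ with marginal probability $p(A_i=1)=q$, and $\eta_i=-2q$ (equivalently $z_i=-1$) corresponds to $A_i=-1$ with $p(A_i=-1)=1-q$. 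Rewriting each summand with the indicators $(1\pm z_i)/2$ then gives
$$
\frac{A_i Y_i}{p(A_i)} \;=\; \frac{1+z_i}{2}\cdot\frac{y_i(1)}{q} \;-\; \frac{1-z_i}{2}\cdot\frac{y_i(-1)}{1-q}.
$$

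Next I would split each summand into a $z_i$-free deterministic piece and a term linear in $z_i$. The coefficient of $z_i$ works out to $\tfrac{y_i(1)}{2q}+\tfrac{y_i(-1)}{2(1-q)}=2\mu_i$ by the definition of $\mmu$, so
$$
\hat\tau \;=\; \frac{1}{n}\sum_{i=1}^n \brs{\frac{y_i(1)}{2q}-\frac{y_i(-1)}{2(1-q)}} \;+\; \frac{2}{n}\sum_{i=1}^n z_i\,\mu_i.
$$
Subtracting $\tau=\tfrac{1}{n}\sum_i(y_i(1)-y_i(-1))$, the deterministic residual collapses to $\tfrac{2(1-2q)}{n}\sum_i\mu_i$, because $y_i(1)$ picks up a factor $(1-2q)/(2q)$ and $y_i(-1)$ a factor $(1-2q)/(2(1-q))$, which together equal $(1-2q)\cdot 2\mu_i$. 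Since $1-2q=-\E[z_i]$, this residual is exactly $-\tfrac{2}{n}\sum_i\E[z_i]\mu_i$, and combining yields
$$
\hat\tau-\tau \;=\; \frac{2}{n}\sum_{i=1}^n (z_i-\E[z_i])\,\mu_i \;=\; \frac{2}{n}\br{\zz-\E[\zz]}^{\top}\mmu \;=\; \frac{2}{n}\eeta^{\top}\mmu,
$$
using the identification $\eeta=\zz-\E[\zz]$.

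For the variance, I would invoke $\E[\eeta]=0$, which together with the first identity makes $\hat\tau$ unbiased and hence $\Var(\hat\tau)=\E[(\hat\tau-\tau)^2]$. Squaring the identity then gives
$$
\E\brs{(\hat\tau-\tau)^2} \;=\; \frac{4}{n^2}\,\E\brs{(\eeta^{\top}\mmu)^2} \;=\; \frac{4}{n^2}\,\mmu^{\top}\E[\eeta\eeta^{\top}]\mmu \;=\; \frac{4}{n^2}\,\mmu^{\top}\Cov(\zz)\,\mmu,
$$
where the final step uses $\Cov(\zz)=\Cov(\eeta)=\E[\eeta\eeta^{\top}]$ by mean-zeroness of $\eeta$.

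The main obstacle is purely clerical: making sure the asymmetric $q$ and $1-q$ weights line up so that the deterministic residual after subtracting $\tau$ is exactly $-\E[z_i]\cdot 2\mu_i$ at each index, i.e., that the definition of $\mmu$ is tuned precisely to cancel the bias of the Horvitz--Thompson weights under nonuniform probability $q$. Once this bookkeeping is verified and the identification $A_i=z_i$ is fixed, the rest is an elementary computation; the only substantive ingredient beyond algebra is the already-established fact that $\E[\eta_i]=0$.
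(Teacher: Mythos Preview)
Your proof is correct. Note, however, that the paper does not supply its own proof of this lemma: it is quoted directly from \citet{harshaw2020balancing} (their Lemma~A2 and Corollary~A1), so there is no in-paper argument to compare against. Your derivation---expressing $\frac{A_i Y_i}{p(A_i)}$ via the indicators $(1\pm z_i)/2$, isolating the $z_i$-linear part with coefficient $2\mu_i$, and recognizing the deterministic residual as $(1-2q)\cdot 2\mu_i=-\E[z_i]\cdot 2\mu_i$---is exactly the standard computation underlying the cited result, and the variance identity then follows immediately by squaring and using $\E[\eeta]=\boldsymbol{0}$.
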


Theorem~\ref{thm:main} immediately implies that assignments generated by Algorithm~\ref{alg:main} are well balanced. But optimizing just for balance can lead to accidental bias~\citep{efron1971forcing}.  We have from Lemma~\ref{lem:ate-err} that $\text{Var}(\hat{\tau}) = \frac{4}{n^2} \lambda_{\max}\br{\Cov(\zz)} \|\mmu\|^2 $ in the worst case when $\mmu$ is along the top eigenvector of $\Cov(\zz)$. %
Therefore, to control accidental bias we need to make sure $\lambda_{\max}\br{\Cov(\zz)}$ is not high. 

We achieve this by augmenting the original covariates $\xx_i$ by $\sqrt{\phi} \ee_i$ to get 
$\left[ \begin{array}{c}
\sqrt{\phi} \ee_i \\
\sqrt{1-\phi} \xx_i
\end{array}
\right]
,$
where  $\phi \in [0, 1]$ is a parameter which controls the extent of the covariate balance, and $\ee_i$ is a basis vector in dimension $n$.  
By a simple calculation, we can see that running Algorithm~\ref{alg:main} on $\xx_i$ augmented with $\sqrt{\phi} \ee_i$ is equivalent to running it with $\xx_i$ and replacing $\ww_{i-1}^{\top} \xx_i$ by $\sqrt{1- \phi} \ww_{i-1}^{\top} \xx_i$ everywhere. Therefore, we don't have to explicitly augment the covariates in the algorithm. 

We note that with augmented covariates, $\ww_n = \left[\begin{array}{c}
\sqrt{\phi} \eeta \\
\sqrt{1-\phi} \boldsymbol{X}^{\top} \eeta
\end{array}\right]$ is a sub-Gaussian or a sub-exponential random vector as in Theorem~\ref{thm:main}.

{\small Let $\phi \in (0,1)$ be fixed, and $\QQ = \br{\phi  \II + (1-\phi)\XX \XX^{\top}}^{-1}.$}
 
\begin{proposition}
\label{prop:z-dist}
When Algorithm~\ref{alg:main} is run with augmented covariates as described above, then 
$\eeta = \zz - \E[\zz]$ is a mean zero $(\sqrt{c/2q}, \QQ)$ sub-Gaussian random vector and also, $\eeta$ is a mean zero $(\sqrt{8Ac},\alpha, \QQ)$ sub-exponential random vector.
\end{proposition}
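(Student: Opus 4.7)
The plan is to apply Theorem~\ref{thm:main} to the algorithm run on the augmented covariates and pull the resulting sub-Gaussian / sub-exponential statement about $\ww_n$ back to a statement about $\eeta$. As already noted in the excerpt, with augmented covariates
\[
\ww_n \;=\; \begin{bmatrix}\sqrt{\phi}\,\eeta \\ \sqrt{1-\phi}\,\XX^\top\eeta\end{bmatrix} \;=\; \tilde\XX^\top\eeta,
\qquad \tilde\XX := \bigl[\sqrt{\phi}\,\II \;\; \sqrt{1-\phi}\,\XX\bigr] \in \mathbb{R}^{n\times(n+d)},
\]
and a direct calculation gives $\tilde\XX\tilde\XX^\top = \phi\II + (1-\phi)\XX\XX^\top = \QQ^{-1}$. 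Hence the orthogonal projection onto the row span of $\tilde\XX$ is
\[
\tilde\PP_n \;=\; \tilde\XX^\top(\tilde\XX\tilde\XX^\top)^{-1}\tilde\XX \;=\; \tilde\XX^\top\QQ\,\tilde\XX,
\]
and Theorem~\ref{thm:main} states that $\ww_n$ is mean-zero $(\sqrt{c/2q},\tilde\PP_n)$ sub-Gaussian and $(\sqrt{8Ac},\alpha,\tilde\PP_n)$ sub-exponential.

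For an arbitrary unit vector $\uu \in \mathbb{R}^n$, I would set $\tilde\uu := \tilde\XX^\top \QQ\,\uu \in \mathbb{R}^{n+d}$, which satisfies $\tilde\XX\tilde\uu = \uu$ so that $\eeta^\top \uu = \eeta^\top \tilde\XX\tilde\uu = \ww_n^\top \tilde\uu$. Two properties of $\tilde\uu$ do all the work: it lies in the row span of $\tilde\XX$, so $\tilde\PP_n \tilde\uu = \tilde\uu$; and $\|\tilde\uu\|^2 = \uu^\top\QQ\,\tilde\XX\tilde\XX^\top\QQ\,\uu = \uu^\top \QQ\, \uu$. Setting $\tilde\uu_0 := \tilde\uu / \|\tilde\uu\|$ and invoking the sub-Gaussian bound of Theorem~\ref{thm:main} for $\ww_n$ in direction $\tilde\uu_0$ with rescaled parameter $\lambda\|\tilde\uu\|$ gives
\[
\E\!\left[\exp\!\br{\lambda\,\eeta^\top\uu}\right] \;=\; \E\!\left[\exp\!\br{\lambda\|\tilde\uu\|\,\ww_n^\top\tilde\uu_0}\right] \;\leq\; \exp\!\br{\frac{\lambda^2 (c/2q)\,\uu^\top\QQ\uu}{2}},
\]
which is precisely the $(\sqrt{c/2q},\QQ)$ sub-Gaussian condition for $\eeta$.

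The sub-exponential statement proceeds identically in structure, with one piece of bookkeeping. Because $\tilde\uu_0^\top \tilde\PP_n \tilde\uu_0 = 1$, the admissible range for $\ww_n^\top \tilde\uu_0$ is $|\lambda'|\leq 1/\alpha$; substituting $\lambda' = \lambda\|\tilde\uu\|$ converts this into $|\lambda|\leq 1/(\alpha\sqrt{\uu^\top\QQ\uu})$, matching the $(\sqrt{8Ac},\alpha,\QQ)$ sub-exponential definition for $\eeta$. The mean-zero claim follows from $\E[\ww_n]=0$: left-multiplying by $\tfrac{1}{\sqrt{\phi}}[\II\ \mathbf{0}]$ recovers $\E[\eeta]=0$. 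The real work here is conceptual rather than computational: one must recognize that the $\QQ$-geometry on $\mathbb{R}^n$ is exactly the pullback of $\tilde\PP_n$ through $\tilde\XX$, so a single normalization of $\tilde\uu$ to a unit vector converts Theorem~\ref{thm:main}'s projection-based bound into the desired $\QQ$-based bound; the remainder is linear-algebraic bookkeeping.
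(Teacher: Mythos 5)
Your argument is correct and is essentially the paper's: both apply Theorem~\ref{thm:main} to the augmented covariates, use $\BB^{\top}\BB = \phi\II + (1-\phi)\XX\XX^{\top} = \QQ^{-1}$ (with $\BB = \tilde{\XX}^{\top}$ in your notation) to identify the relevant projection with $\BB\QQ\BB^{\top}$, and pull the resulting bound on $\ww_n = \BB\eeta$ back to $\eeta$. The only differences are cosmetic: the paper tests against the direction $[\uu;\,0]$, whose quadratic form under the projection is the top-left block $\phi\,\uu^{\top}\QQ\uu$, whereas you test against the normalized in-span vector $\tilde{\XX}^{\top}\QQ\uu$ — and you carry out the sub-exponential $\lambda$-range bookkeeping explicitly where the paper simply states that the result ``follows similarly.''
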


Proposition~\ref{prop:z-dist} shows that $\E[\eeta] = 0$ for all $i.$ 
\begin{proposition}[Unbiasedness]
When Algorithm~\ref{alg:main} is run with augmented covariates, we have with probability at least $1-\delta$, $\E \brs{\sum_i \xx_i \eta_i} = 0.$
\end{proposition}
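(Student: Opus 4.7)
The plan is to derive the unbiasedness as a one-line consequence of Proposition~\ref{prop:z-dist}. Under augmented covariates, iterating the update rule from $\ww_0 = 0$ yields
\[
\ww_n \;=\; \sum_{i=1}^n \eta_i \begin{pmatrix} \sqrt{\phi}\,\ee_i \\ \sqrt{1-\phi}\,\xx_i \end{pmatrix} \;=\; \begin{pmatrix} \sqrt{\phi}\,\eeta \\ \sqrt{1-\phi}\,\XX^{\top}\eeta \end{pmatrix},
\]
so the second block is exactly $\sqrt{1-\phi}\sum_i \eta_i \xx_i$. Proposition~\ref{prop:z-dist} states that $\eeta$ is a mean-zero sub-Gaussian random vector; in particular every coordinate satisfies $\E[\eta_i]=0$.

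Next I would apply linearity of expectation. Since the inputs $\xx_1,\dots,\xx_n$ are the covariates supplied to the algorithm --- deterministic in the fixed-design case, or, in the oblivious-adversary setting, independent of the assignment coin tosses given the covariate history $\mathcal{H}_{i-1}$ --- the $\xx_i$ may be pulled outside the expectation:
\[
\E\brs{\sum_{i=1}^n \xx_i \eta_i} \;=\; \sum_{i=1}^n \xx_i\,\E[\eta_i] \;=\; 0.
\]
Equivalently, one can simply note that the lower block of $\ww_n$ displayed above is mean zero, again by Proposition~\ref{prop:z-dist}, and read off the conclusion.

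The ``with probability at least $1-\delta$'' qualifier concerns well-definedness rather than the expectation itself: $\eta_i$ is assigned only when the algorithm enters the randomization branch, i.e., when $|\ww_{i-1}^{\top}\xx_i|\leq c$ at step $i$. Part~(3) of Theorem~\ref{thm:main} guarantees that this holds for every $i$ simultaneously with probability at least $1-\delta$; on that event $\eeta$ is defined and the two displays above go through. The substantive work therefore lives entirely upstream, inside Proposition~\ref{prop:z-dist} (which in turn rests on the sub-Gaussian and sub-exponential characterizations of Theorem~\ref{thm:main}); once those are in hand, there is no serious obstacle here, and the unbiasedness claim is essentially immediate.
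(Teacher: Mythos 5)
Your proposal is correct and matches the paper's (implicit) argument: the paper gives no separate proof, simply noting that Proposition~\ref{prop:z-dist} yields $\E[\eeta]=0$ and reading off $\E[\sum_i \xx_i\eta_i]=\E[\XX^\top\eeta]=0$, with the $1-\delta$ qualifier covering well-definedness of $\eeta$ exactly as you describe. Your ``equivalently'' route via the mean-zero lower block of $\ww_n$ is the cleanest phrasing, since it sidesteps any question of pulling a possibly history-dependent $\xx_i$ out of the expectation term by term.
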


For rest of the section, we let $\sigma^2 = c/2q$ if $c = \log(2n/\delta)/q$ and $\sigma^2 = 12Ac$ if $c = 9.3 \log(2n/\delta).$ 
\begin{proposition}[Eigenvalues of Treatment Covariance]
\label{prop:spectral-bound}
With probability at least $1-\delta$, Algorithm~\ref{alg:main} produces $\eeta$ satisfying
$
\Cov(\zz) = \Cov(\eeta) \preceq \sigma^2 \QQ.
$

\end{proposition}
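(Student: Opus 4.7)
The plan is to treat this as a short corollary of Proposition~\ref{prop:z-dist} together with the covariance conversion lemma alluded to in the Remark after Theorem~\ref{thm:main} (i.e.\ Lemma~\ref{lem:vector-var}). First I would observe that $\zz - \E[\zz] = \eeta$ by the definitions $z_i = \eta_i + 2q - 1$ and $\E[\eta_i]=0$ (recorded already before the proposition), so $\Cov(\zz) = \Cov(\eeta)$ as a distributional identity; this takes care of the first equality in the statement and reduces the task to bounding $\Cov(\eeta)$.

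Next I would split into the two regimes built into the definition $c = \min(1/q,9.3)\log(2n/\delta)$. In the regime $1/q \leq 9.3$, so that $c = \log(2n/\delta)/q$, Proposition~\ref{prop:z-dist} gives that $\eeta$ is $(\sqrt{c/2q},\QQ)$ sub-Gaussian; Lemma~\ref{lem:vector-var} then yields $\Cov(\eeta) \preceq (c/2q)\,\QQ = \sigma^2 \QQ$, matching the definition $\sigma^2 = c/2q$ in this regime. In the other regime, $c = 9.3\log(2n/\delta)$, Proposition~\ref{prop:z-dist} gives that $\eeta$ is $(\sqrt{8Ac},\alpha,\QQ)$ sub-exponential, and the sub-exponential half of Lemma~\ref{lem:vector-var} yields $\Cov(\eeta) \preceq \tfrac{3}{2}\cdot 8Ac\,\QQ = 12Ac\,\QQ = \sigma^2\QQ$, again matching the definition. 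Either way the claimed spectral bound holds.

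Finally, the $1-\delta$ probability is not about the covariance inequality itself but about $\eeta$ being \emph{defined}: the random vector $\eeta$ is only well-defined on the event $\mathcal{E} = \{\,|\ww_{i-1}^{\top}\xx_i| \leq c \text{ for all } i\,\}$, which has probability at least $1-\delta$ by Theorem~\ref{thm:main}(3). On this event the conditional covariance inherits the sub-Gaussian / sub-exponential tails derived in Proposition~\ref{prop:z-dist}, and the bound applies.

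The only mildly delicate step is making sure the covariance conversion really gives the stated constants in the sub-exponential case (the factor $3/2$ from Lemma~\ref{lem:vector-var} combines with $8A$ to produce exactly $12A$), and making sure the conditioning on $\mathcal{E}$ is consistent with how Proposition~\ref{prop:z-dist} phrases the sub-Gaussian and sub-exponential properties of $\eeta$; everything else is bookkeeping between the two values of $c$.
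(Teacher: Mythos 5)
Your proposal is correct and follows essentially the same route as the paper: both split on the two values of $c$, invoke Proposition~\ref{prop:z-dist} for the sub-Gaussian/sub-exponential property of $\eeta$, and convert to a covariance bound via Lemma~\ref{lem:vector-var} (the paper cites its scalar precursor, Lemma~\ref{lem:subexp-var}, for the factor $3/2$), arriving at $\sigma^2\QQ$ in each regime. Your additional remarks on $\Cov(\zz)=\Cov(\eeta)$ and on the $1-\delta$ event being the one on which $\eeta$ is defined are consistent with how the paper sets things up before the proposition.
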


\subsection{Balance}
\label{sec:balance}

\begin{proposition}
\label{prop:balance}
Let $\ww = \sum_i \eta_i \xx_i$. With probability at least $1-\delta$,
{
\small
\[
\left \| \ww \right \|_{2} \leq \sqrt{d} \left \| \ww \right \|_{\infty} \leq 
\min \left( \frac{1}{q}, 9.3 \right) \sqrt{\frac{ {d \log (4d/\delta)\log(4n/\delta)} }{2 (1-\phi) \phi}}.
\]
}
\end{proposition}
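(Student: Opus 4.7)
The plan is to note that $\ww=\sum_i \eta_i \xx_i=\XX^\top \eeta$ lies in $\mathbb{R}^d$, so the elementary inequality $\|\ww\|_2\leq \sqrt{d}\,\|\ww\|_\infty$ reduces everything to controlling each coordinate $\ww_j=(\xx^{(j)})^\top \eeta$, where $\xx^{(j)}$ denotes the $j$-th column of $\XX$. I would then apply a scalar tail bound coordinate-wise and union-bound over $j\in[d]$.

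By Proposition~\ref{prop:z-dist}, with augmented covariates $\eeta$ is simultaneously $(\sqrt{c/2q},\QQ)$ sub-Gaussian and $(\sqrt{8Ac},\alpha,\QQ)$ sub-exponential, so in either case $(\xx^{(j)})^\top\eeta$ inherits a scalar concentration whose variance proxy (or Bernstein parameters) scale with $(\xx^{(j)})^\top\QQ\xx^{(j)}$. I would split on the regime of $c$ built into the algorithm: when $c=\log(2n/\delta)/q$ (i.e.\ $1/q\leq 9.3$), invoke the sub-Gaussian tail, which yields the $1/q$ prefactor; when $c=9.3\log(2n/\delta)$, invoke Bernstein's inequality for sub-exponentials and check that the quadratic part dominates at the required threshold, so that the leading constant collapses (using $A=0.5803$) to a number on the order of $9.3$. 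The two cases together realize the $\min(1/q,9.3)$ prefactor.

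Next, I would bound $(\xx^{(j)})^\top\QQ\xx^{(j)}$ spectrally. Because $\QQ^{-1}=\phi\II+(1-\phi)\XX\XX^\top\succeq\phi\II$, we have $\QQ\preceq\II/\phi$. A push-through identity gives that the nonzero eigenvalues of $\XX^\top\QQ\XX$ are $\lambda_k/(\phi+(1-\phi)\lambda_k)\leq 1/(1-\phi)$, hence $(\xx^{(j)})^\top\QQ\xx^{(j)}\leq 1/(1-\phi)\leq 1/((1-\phi)\phi)$, which is the slightly loose but cleaner form appearing in the statement. This is where the $(1-\phi)\phi$ factor in the denominator of the bound enters.

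Finally, union-bounding over $j\in[d]$ with failure $\delta/(2d)$, and budgeting the other half of $\delta$ for the event from Theorem~\ref{thm:main} under which the sub-Gaussian/sub-exponential characterization of $\eeta$ from Proposition~\ref{prop:z-dist} holds, yields the claimed $\|\ww\|_\infty$ bound; then $\|\ww\|_2\leq\sqrt{d}\,\|\ww\|_\infty$ delivers the $\ell_2$ bound. The main obstacle I expect is cleanly matching constants across the two $c$-regimes, especially verifying in the sub-exponential regime that the Bernstein threshold at which the linear tail would take over sits above the quadratic one, so the bound truly looks sub-Gaussian with constant $\approx 9.3$. Once the concentration inequalities and the spectral bound on $(\xx^{(j)})^\top\QQ\xx^{(j)}$ are in hand, everything else is routine.
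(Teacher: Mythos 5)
Your proposal matches the paper's proof: both reduce to coordinate-wise concentration of $\XX^{\top}\eeta$, split on the two regimes of $c$ (sub-Gaussian tail versus sub-exponential tail with the quadratic branch checked to apply at the relevant threshold), bound the relevant quadratic form by $1/(\phi(1-\phi))$ via the spectrum of $\QQ$, and finish with a union bound over the $d$ coordinates together with $\left\|\ww\right\|_{2}\leq\sqrt{d}\left\|\ww\right\|_{\infty}$. The only cosmetic difference is that you bound $\XX^{\top}\QQ\XX$ directly where the paper reads off the corresponding block of the augmented projection applied to $\BB\eeta$; these are the same quantity, so the arguments coincide.
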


\subsection{Error bounds}

\begin{proposition}[Concentration of ATE]
\label{prop:ate-concentration}
Algorithm~\ref{alg:main} when run with augmented covariates, generates a random assignment $\zz$ such that
\[
|\hat{\tau} - \tau| = \frac{2}{n} | \eeta^{\top} \mmu| \leq \frac{2c}{n} 
\sqrt{\mmu^{\top} \QQ \mmu}.
\]
with probability $1-\delta.$ 
\end{proposition}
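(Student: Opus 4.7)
The plan is to apply standard scalar concentration bounds to the linear functional $\eeta^{\top}\mmu$. First I would invoke Lemma~\ref{lem:ate-err} to rewrite $\hat{\tau}-\tau = \tfrac{2}{n}\eeta^{\top}\mmu$, reducing the claim to $|\eeta^{\top}\mmu| \leq c\sqrt{\mmu^{\top}\QQ\mmu}$ with probability at least $1-\delta$. By Proposition~\ref{prop:z-dist}, $\eeta$ is both $(\sqrt{c/2q},\QQ)$ sub-Gaussian and $(\sqrt{8Ac},\alpha,\QQ)$ sub-exponential. Specializing the defining MGF inequalities to the unit vector $\uu = \mmu/\|\mmu\|$ and rescaling by $\|\mmu\|$ shows that the scalar $\eeta^{\top}\mmu$ is $(c/2q)\,\mmu^{\top}\QQ\mmu$-sub-Gaussian and $(\sqrt{8Ac\,\mmu^{\top}\QQ\mmu},\,\alpha\sqrt{\mmu^{\top}\QQ\mmu})$-sub-exponential.

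Next I would split on which branch of $\min(1/q,9.3)$ defines $c$ in Algorithm~\ref{alg:main}. When $c = \log(2n/\delta)/q$ (i.e.\ $q \geq 1/9.3$), the standard sub-Gaussian tail $P(|X|>t)\leq 2\exp(-t^{2}/(2\sigma^{2}))$ gives $|\eeta^{\top}\mmu|\leq \sqrt{(c/q)\log(2/\delta)}\,\sqrt{\mmu^{\top}\QQ\mmu}$ off a $\delta$-event, and the condition $c \geq \log(2/\delta)/q$ is immediate from the choice of $c$, making this at most $c\sqrt{\mmu^{\top}\QQ\mmu}$. When $c = 9.3\log(2n/\delta)$, I would instead use the Bernstein-type tail $P(|X|>t)\leq 2\exp(-\min(t^{2}/(2\nu^{2}),\,t/(2\alpha)))$: setting the right side to $\delta$ yields $t \geq \max\!\bigl(\sqrt{16Ac\log(2/\delta)},\,(4/B)\log(2/\delta)\bigr)\sqrt{\mmu^{\top}\QQ\mmu}$, so the target reduces to the numerical check $c \geq \max(16A,\,4/B)\log(2/\delta)$.

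The only real obstacle is arithmetic: with the stated constants $A = 0.5803$ and $B = 0.4310$ we have $16A \approx 9.28$ and $4/B \approx 9.28$, so the cushion $9.3$ in the definition of $c$ simultaneously absorbs both the quadratic and linear regimes of the Bernstein bound, recovering the clean constant $c$ in the target. The small union bound with Theorem~\ref{thm:main}(3) (which already costs $\delta$ for $\eeta$ to be well-defined) is handled by using $\delta/2$ in each event and adjusting the logarithm, a routine rescaling that is absorbed into the $\log(2n/\delta)$ convention and does not affect the leading constant in the bound.
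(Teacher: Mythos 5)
Your proposal is correct and follows essentially the same route as the paper: reduce via Lemma~\ref{lem:ate-err}, apply Proposition~\ref{prop:z-dist} together with the scalar sub-Gaussian and sub-exponential tail bounds to $\eeta^{\top}\mmu$, split on which branch of $\min(1/q, 9.3)$ defines $c$, and verify the constants numerically. The only cosmetic difference is that you check both regimes of the Bernstein tail via $c \geq \max(16A, 4/B)\log(2/\delta)$, whereas the paper verifies $\nu^{2}/\alpha = 4ABc > c$ to place $t = c$ in the quadratic regime; these are equivalent checks and both rely on the same slack from the $\log(2n/\delta)$ factor in $c$.
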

Lemma A10 in~\cite{harshaw2020balancing} shows that $\mmu^{\top} \QQ \mmu = \min _{\boldsymbol{\beta} \in \mathbb{R}^{d}}\left[\frac{1}{\phi }\|\boldsymbol{\mu}-\boldsymbol{X} \boldsymbol{\beta}\|^{2}+\frac{\|\boldsymbol{\beta}\|^{2}}{(1-\phi) } \right].$

\begin{proposition}
\label{prop:ate-worstcase}
The worst-case mean squared error of the online balancing walk design is upper bounded by %
\begin{align*}
&\mathrm{E}\left[(\widehat{\tau}-\tau)^{2}\right] \leq \frac{4 \sigma^2}{\phi n^2} \sum_{i=1}^{n} \mu_{i}^{2}\\ %
\end{align*}
where $\phi \in(0,1]$ with probability $1 - \delta$.
\end{proposition}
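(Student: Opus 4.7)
The plan is to chain together the variance identity from Lemma~\ref{lem:ate-err}, the spectral bound on $\Cov(\zz)$ from Proposition~\ref{prop:spectral-bound}, and a trivial upper bound on $\QQ$. Since $\hat\tau$ is unbiased (the preceding Unbiasedness proposition), the mean squared error equals the variance, so it suffices to bound $\Var(\hat\tau)$.

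First, I would invoke Lemma~\ref{lem:ate-err} to write
\[
\mathrm{E}\bigl[(\hat\tau-\tau)^2\bigr] \;=\; \frac{4}{n^2}\,\mmu^{\top}\Cov(\zz)\,\mmu .
\]
Next, applying Proposition~\ref{prop:spectral-bound}, which holds on the good event of probability at least $1-\delta$ on which Algorithm~\ref{alg:main} is well-defined, gives $\Cov(\zz) \preceq \sigma^2 \QQ$, hence
\[
\mathrm{E}\bigl[(\hat\tau-\tau)^2\bigr] \;\leq\; \frac{4\sigma^2}{n^2}\,\mmu^{\top}\QQ\,\mmu .
\]

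The remaining step is to bound $\mmu^{\top}\QQ\,\mmu$ by $\tfrac{1}{\phi}\|\mmu\|^2$. This follows immediately from the definition $\QQ=(\phi\II+(1-\phi)\XX\XX^{\top})^{-1}$: since $(1-\phi)\XX\XX^{\top}\succeq 0$, we have $\phi\II+(1-\phi)\XX\XX^{\top}\succeq \phi\II$, and inverting the order reverses to $\QQ\preceq \tfrac{1}{\phi}\II$. Equivalently, this is the $\bb=0$ case of the variational identity from Lemma A10 of \cite{harshaw2020balancing} quoted just before the statement. Substituting and recognizing $\|\mmu\|^2 = \sum_{i=1}^{n}\mu_i^2$ yields the claim.

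The proof is essentially bookkeeping: all of the work has been done in establishing the sub-Gaussian/sub-exponential control on $\ww_i$ (Theorem~\ref{thm:main}) and its translation into the covariance bound $\Cov(\zz) \preceq \sigma^2 \QQ$ (Proposition~\ref{prop:spectral-bound}). There is no real obstacle at this stage; the only thing to be careful about is that the spectral bound holds only on the event of probability $1-\delta$ on which Algorithm~\ref{alg:main} successfully produces $\eeta$, which matches the ``with probability $1-\delta$'' qualifier in the statement.
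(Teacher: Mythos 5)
Your proof is correct and follows exactly the paper's route: Lemma~\ref{lem:ate-err} for the variance identity, Proposition~\ref{prop:spectral-bound} for $\Cov(\zz)\preceq\sigma^2\QQ$, and the observation that $\QQ\preceq\tfrac{1}{\phi}\II$ (which the paper states with a typographical slip as $\QQ\preceq\tfrac{\sigma^2}{\phi}\II$). Your version is in fact slightly more explicit about where the probability-$(1-\delta)$ qualifier enters, which is a small improvement over the paper's two-line proof.
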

\begin{proof}[Proof of Proposition~\ref{prop:ate-worstcase}]

This follows from Lemma~\ref{lem:ate-err} and Proposition~\ref{prop:spectral-bound}. We note that $\QQ \preceq \frac{\sigma^2}{\phi} \II.$
\end{proof}
\begin{proposition}[Ridge Connection]
\label{prop:ate-ridge}
The worst-case mean squared error of the online balancing walk design is upper bounded by an implicit ridge regression estimator with regularization proportional to $\phi$. That is, %
\begin{align*}
    &\E \left[(\widehat{\tau}-\tau)^{2}\right] \leq \frac{4 \sigma^2 L}{n} \\
    &\text { where } \quad L=\min _{\boldsymbol{\beta} \in \mathbb{R}^{d}}\left[\frac{1}{\phi n}\|\boldsymbol{\mu}-\boldsymbol{X} \boldsymbol{\beta}\|^{2}+\frac{\|\boldsymbol{\beta}\|^{2}}{(1-\phi) n}\right]
\end{align*}
with probability $1 - \delta$.
\end{proposition}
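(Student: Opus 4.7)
The proposition is essentially a repackaging of Proposition~\ref{prop:ate-worstcase} (or equivalently Lemma~\ref{lem:ate-err} combined with Proposition~\ref{prop:spectral-bound}) using the variational identity for $\mmu^{\top}\QQ\mmu$ cited from \citet{harshaw2020balancing}. My plan is therefore to chain three ingredients and then rescale by $n$ so that the bound matches the stated ridge-regression form.

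First, I would start from Lemma~\ref{lem:ate-err}, which gives the exact expression
\[
\E\!\brs{(\hat\tau - \tau)^2} \;=\; \frac{4}{n^2}\,\mmu^{\top}\Cov(\zz)\,\mmu.
\]
Second, I would invoke Proposition~\ref{prop:spectral-bound}: on the high-probability event of measure at least $1-\delta$ on which Algorithm~\ref{alg:main} produces a well-defined $\eeta$, we have $\Cov(\zz) \preceq \sigma^2 \QQ$. Since $\mmu \mmu^{\top} \succeq 0$, this spectral inequality yields $\mmu^{\top}\Cov(\zz)\mmu \le \sigma^2\,\mmu^{\top}\QQ\mmu$, and hence
\[
\E\!\brs{(\hat\tau - \tau)^2} \;\le\; \frac{4\sigma^2}{n^2}\,\mmu^{\top}\QQ\mmu.
\]

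Third, I would substitute the Harshaw--Sävje--Spielman identity (Lemma A10 of \cite{harshaw2020balancing}) stated right after Proposition~\ref{prop:ate-concentration}, namely
\[
\mmu^{\top}\QQ\mmu \;=\; \min_{\boldsymbol{\beta}\in\mathbb{R}^d}\brs{\tfrac{1}{\phi}\|\mmu - \XX\boldsymbol{\beta}\|^2 + \tfrac{\|\boldsymbol{\beta}\|^2}{1-\phi}}.
\]
Plugging this in gives a bound of the form $\tfrac{4\sigma^2}{n^2}\,\min_{\beta}[\cdot]$, and pulling an overall factor of $1/n$ inside the minimization rewrites each summand as $\tfrac{1}{\phi n}\|\mmu - \XX\boldsymbol{\beta}\|^2 + \tfrac{\|\boldsymbol{\beta}\|^2}{(1-\phi) n}$, which is precisely $L$. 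The remaining factor of $\tfrac{4\sigma^2}{n^2}\cdot n = \tfrac{4\sigma^2}{n}$ matches the stated conclusion $\E[(\hat\tau-\tau)^2]\le \tfrac{4\sigma^2 L}{n}$.

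There is no genuine obstacle here: every step is either an immediate application of an already-stated result or an algebraic rescaling. The only small points to keep track of are (i) that the spectral bound $\Cov(\zz)\preceq\sigma^2\QQ$ holds only on the $(1-\delta)$-event where Algorithm~\ref{alg:main} is well-defined, which matches the ``with probability $1-\delta$'' caveat in the statement, and (ii) the bookkeeping of where the factor of $n$ is absorbed when moving from $\mmu^{\top}\QQ\mmu$ to the per-observation quantity $L$.
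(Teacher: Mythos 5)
Your proposal is correct and follows essentially the same route as the paper, whose proof of Proposition~\ref{prop:ate-ridge} is a one-line citation of Proposition~\ref{prop:spectral-bound} together with the proof of Theorem~3 in \citet{harshaw2020balancing}; the chain you spell out (Lemma~\ref{lem:ate-err}, then $\Cov(\zz)\preceq\sigma^2\QQ$ on the $(1-\delta)$-event, then the Lemma~A10 ridge identity and the $1/n$ rescaling) is exactly what that citation unpacks to. No gaps.
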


\subsection{Algorithm with Restart}
\label{sec:restart}
We saw earlier that $\eeta, \zz$ are defined only with probability $1-\delta.$ This is because with our choice of $c$, only with probability $1-\delta$ we have for all $i, |\ww_i^{\top} \xx_i| \leq c.$ This means our treatment assignment fails with probability $\delta.$ There is a simple way to make sure that the algorithm never fails and have same error bound guarantees with a slightly worse constant. 

This is achieved by slightly modifying Algorithm~\ref{alg:main} so that whenever we have $|\ww_{i-1}^{\top} \xx_i| > c$ for a particular $i,$ we start a new instance of the algorithm for covariates $\xx_{i+1}, ...\xx_n.$ This is equivalent to setting $\ww_{i-1} = 0$ and continuing with the algorithm. 

Since for any treat assignment procedure $\E \br{\hat{\tau}-\tau}^2$ just depends on $\Cov(\zz)$, and Algorithm~\ref{alg:main} fails with probability $\geq \delta,$ we can show that 
\begin{align*}
  \Cov(\zz) &\preceq (1-\delta)\QQ  + \delta \QQ + \delta^2 \QQ + ...\\
  &\preceq 2 \QQ \text{ when } \delta \leq 1/2.
\end{align*}

Therefore, for the modified algorithm, we will have error guarantees as in Propositions~\ref{prop:ate-worstcase} and \ref{prop:ate-ridge} (but worse by at most a factor of $2$) and with probability one.
\subsection{Multiple Treatments}

In this section, we consider an online multi-treatment setting, where each vector is assigned to a group in  $M = \{m_1, m_2, \ldots, m_k\}$ immediately on arrival. For each $1 \leq i\leq k$, group $m_i$ is associated with a weight $\alpha_i$. 
The goal is to minimize the multi-treatment discrepancy:
\begin{align*}
&\max_{m_1, m_2 \in M} 2 \left\| \frac{\sss(m_1)/{\alpha_1} - \sss(m_2)/{\alpha_2}}{ {1}/{\alpha_1} + {1}/{\alpha_2}}  \right\|_{\infty} 
\end{align*}
where $\sss(m)$ is the sum of all vectors assigned to treatment $m$.
Notice that by setting $\alpha_1 = \frac{1}{1-q}$ and $\alpha_2 = \frac{1}{q}$, 
we can recover the definition given for the weighted discrepancy between two treatments $m_1$ and $m_2$.

Our algorithm can leverage any oracle (we call it \texttt{BinaryBalance} in Algorithm~\ref{alg:multi}) that minimizes the weighted discrepancy for two treatments. Our results are obtained by using Algorithm~\ref{alg:weighted}.

We first build a binary tree where each leaf of the tree corresponds to one of the $k$ treatments in $M$. Let $h$ be the smallest integer such that $2^{h} \geq k$. We start with a complete binary tree of height $h$, and then remove $2^h - k$ leaves from the tree such that no two siblings are removed. Note that this is possible by the definition of $h$. We further contract each internal node with only one child to its child. This process does not change the number of leaves in the tree.  Let $T$ be the obtained tree. By construction, all internal nodes of $T$ have 2 children and $T$ has exactly $k$ leaves. We then associate each leaf of $T$ with a treatment in $M$. For each vector assigned to treatment $m_i$, we also say that it is assigned to the leaf corresponding to $m_i$, 
$\forall 1 \leq i \leq k$.
For each node $v \in T$, denote by $\sss(v)$  the sum of all vectors assigned to leaves under $v$. 
In addition, let $\alpha(v)$ be the sum of all weights assigned to leaves under $v$.
For each internal node $v$ of $T$, the weighted discrepancy vector at $v$ is defined as: 
\begin{align*}
    \ww(v) &=  \frac{\alpha(v_r)}{\alpha(v_l) + \alpha(v_r)} \sss(v_l) - \frac{\alpha(v_l)}{\alpha(v_l) + \alpha(v_r)} \sss(v_r) \\
    &= \frac{{\sss(v_l)}/{\alpha(v_l)}  - {\sss(v_r)}/{\alpha(v_r)}}{{ 1}/{\alpha(v_l)} + { 1}/{\alpha(v_r)}},
\end{align*}
where $v_l$ and $v_r$ are the left and right child of $v$ respectively.

For each internal node $v$ in $T$, we maintain an independent run of a two-treatment algorithm that minimizes $\|\ww(v)\|_{\infty}$. At a high level, we minimize the weighted discrepancies at all internal nodes simultaneously. When a new vector $\xx$ arrives, we first feed it to the algorithm at root $r$. If the result is $+$, we continue with the left sub-tree of $r$. Otherwise, we go to the right sub-tree. We continue in that manner until we reach a leaf $l$. $\xx$ will then be assigned to $l$ (and the treatment associated with $l$).
\begin{restatable}{theorem}{multi}
\label{thm:multi}
Let \texttt{BinaryBalance} be Algorithm~\ref{alg:weighted}. Then Algorithm~\ref{alg:multi} obtains 
$O\left(\log k \sqrt{\frac{ {(1-\phi)d \log (dk/\delta)\log(nk/\delta)} }{\phi}} \right)$ 
multi-treatment discrepancy with probability $1-\delta$. 
\end{restatable}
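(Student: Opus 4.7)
The plan is to reduce the multi-treatment discrepancy to binary discrepancies at the internal nodes of $T$ and then telescope along the lowest-common-ancestor path between pairs of leaves. At each internal node $v$ I would run an independent copy of Algorithm~\ref{alg:weighted} on the (augmented) covariates routed through $v$, with parameter $q_v = \alpha(v_l)/(\alpha(v_l)+\alpha(v_r))$. With this choice the discrepancy vector produced by Algorithm~\ref{alg:weighted} at $v$ is (up to a factor of $2$) exactly the $\ww(v)$ defined in the paper. Applying Proposition~\ref{prop:balance} with failure parameter $\delta/k$ at each $v$ and union-bounding over the at most $k-1$ internal nodes of $T$ yields, with probability at least $1-\delta$,
\[
\|\ww(v)\|_\infty \;\leq\; C\sqrt{\tfrac{(1-\phi)\,d\,\log(dk/\delta)\log(nk/\delta)}{\phi}} \quad \text{for every internal } v,
\]
where the constant $C$ absorbs the $\min(1/q_v,9.3)$ factor of Proposition~\ref{prop:balance}, bounded by $9.3$ thanks to the explicit cap in Algorithm~\ref{alg:weighted}.

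Next, write $\bar\sss(v) := \sss(v)/\alpha(v)$. The identities $\sss(v)=\sss(v_l)+\sss(v_r)$, $\alpha(v)=\alpha(v_l)+\alpha(v_r)$ together with the given formula for $\ww(v)$ give
\[
\bar\sss(v_l) - \bar\sss(v) = \tfrac{\ww(v)}{\alpha(v_l)}, \qquad \bar\sss(v) - \bar\sss(v_r) = \tfrac{\ww(v)}{\alpha(v_r)}.
\]
For any two treatments $m_1, m_2$ with LCA $u$, iterating these identities along the two halves of the path joining them produces a telescoping expression
\[
\bar\sss(m_1) - \bar\sss(m_2) = \sum_{j} \pm\tfrac{\ww(u_j)}{\alpha(u_{j-1})},
\]
a sum of at most $2h \leq 2\lceil\log_2 k\rceil$ terms.

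To conclude, I would take $\|\cdot\|_\infty$ and apply the triangle inequality. The key observation is that for every node $u_{j-1}$ on the $m_1$-to-$u$ half of the path, $\alpha(u_{j-1}) \geq \alpha(m_1) = \alpha_1$ (since $u_{j-1}$'s subtree contains the $m_1$-leaf), so the first-half terms contribute at most $hB/\alpha_1$ and similarly $hB/\alpha_2$ for the second half, where $B$ is the uniform bound from step one. Hence $\|\bar\sss(m_1)-\bar\sss(m_2)\|_\infty \leq hB\,(1/\alpha_1 + 1/\alpha_2)$, and multiplying by the outer prefactor $2/(1/\alpha_1+1/\alpha_2)$ that converts this difference into the weighted discrepancy of the theorem statement gives exactly $2hB = O(\log k)\cdot B$. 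Taking the max over pairs $(m_1,m_2)\in M^2$ finishes the proof.

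The main obstacle is the last step of weight bookkeeping: one must verify that $1/\alpha(u_{j-1})$ at each telescoped step is bounded by $1/\alpha_i$ for the relevant endpoint leaf, so that multiplication by the outer $2/(1/\alpha_1+1/\alpha_2)$ eliminates all weight-dependence and leaves only the tree-depth $\log k$. A secondary wrinkle is ensuring the per-node $1/q_v$ factors in Proposition~\ref{prop:balance} do not accumulate across internal nodes of varying imbalance; fortunately the hard cap of $9.3$ in Algorithm~\ref{alg:weighted} keeps this a uniform constant.
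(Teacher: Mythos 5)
Your proposal is correct and follows essentially the same route as the paper's proof: independent per-node runs of the binary algorithm with failure probability $\delta/k$, a union bound over the $O(k)$ internal nodes, and a telescoping argument along tree paths in which the relevant weight ratios are bounded by $1$, yielding $2\log k \cdot D(\delta/k)$. The only cosmetic difference is that you telescope the normalized sums $\sss(v)/\alpha(v)$ to the lowest common ancestor and restore the prefactor $2/(1/\alpha_1+1/\alpha_2)$ at the end, whereas the paper telescopes the unnormalized $\sss(v)$ all the way to the root and then combines the two leaves there.
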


\begin{algorithm}[h]
\caption{\texttt{KGroupBalance} takes each input vector ${x}_i$ and assigns it to one of the groups online to maintain low discrepancy with probability $1-\delta$.}
\begin{algorithmic}
\STATE {\bfseries Input:} ${\xx}, k, \alpha$.
\STATE $h$ $\gets$ smallest integer such that $2^h \geq k$. 

\STATE $T \gets$ complete binary tree with height $h$. Remove $2^h - k$ leaves from $T$ such that no two siblings are removed. Associate each treatment to a leaf of $T$. Contract each internal node in $T$ with one child to its child.

\FOR{node $v$ in $T$}
    \STATE $\alpha(v) \gets$ sum of all weights of groups associating with leaves under $v$.
\ENDFOR

\FOR{internal node $v$ of $T$}
    \STATE Instantiate \texttt{BinaryBalance}($v$) $\gets$ oracle for weighted discrepancy problem at $v$ with weighted signs $\frac{\alpha(v_r)}{\alpha(v_l) + \alpha(v_r)}$
    and $-\frac{\alpha(v_l)}{\alpha(v_l) + \alpha(v_r)}$.
\ENDFOR

\FOR{i from 1 to n}

\STATE $v \gets $ root of $T$. 

\FOR{$v$ is an internal node of $T$}

\STATE Feed $\xx_i$ to \texttt{BinaryBalance}($v$)

\STATE $v \gets $ one of the children of $v$ according to the assignment of \texttt{BinaryBalance}($v$) on input $\xx_i$
\ENDFOR

\STATE Assign $x_i$ to the group corresponding to $v$.

\ENDFOR
\end{algorithmic}

\label{alg:multi}
\end{algorithm}

\begin{figure}
    \centering
    \includegraphics[width=0.375\textwidth]{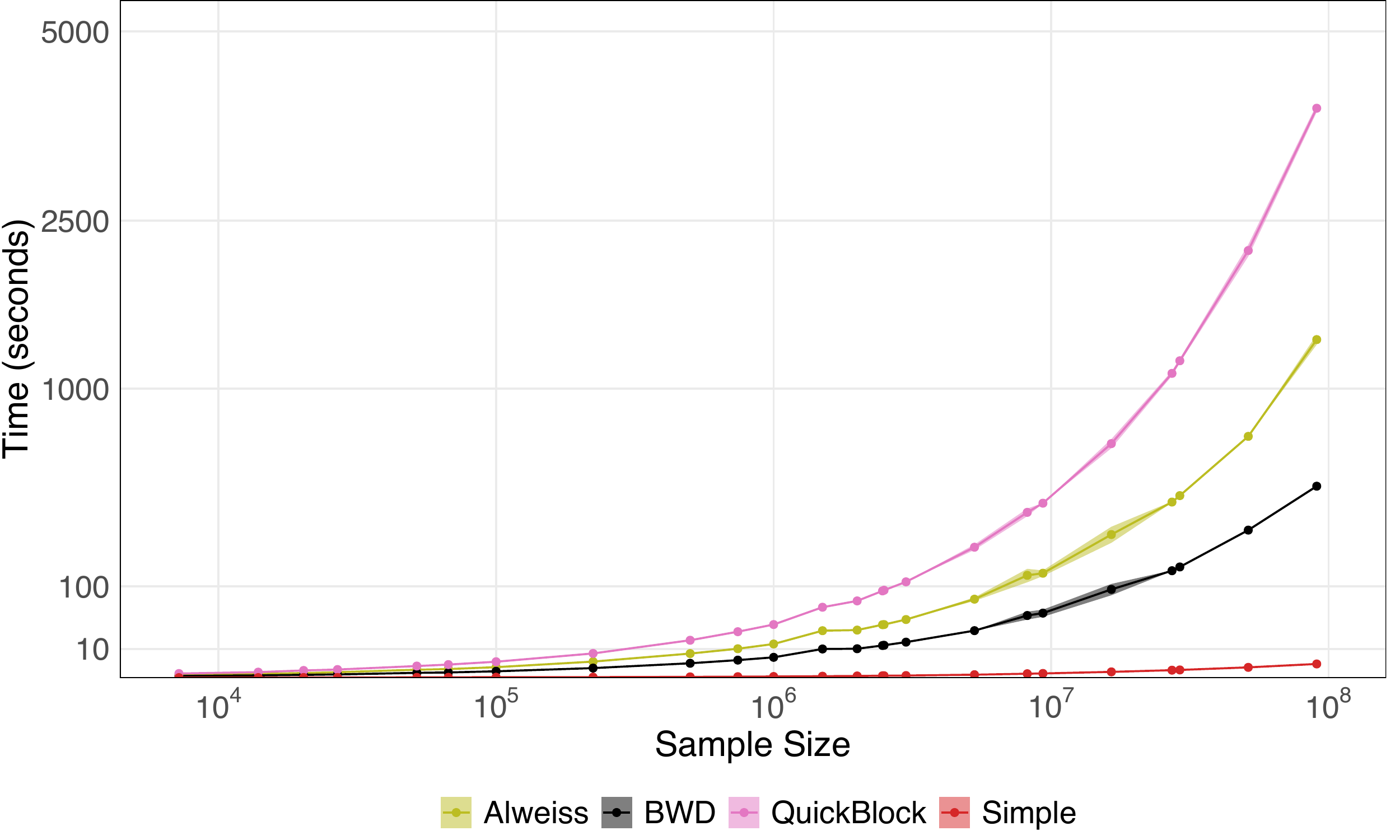}
    \caption{Time to design. All timings performed on a \texttt{ml.r5.2xlarge} instance of Amazon SageMaker. The y-axis is scaled by the square root for easier visualization.}
    \label{fig:time}
\end{figure}

\section{Experiments}
\label{sec:simulations}

In this section, we provide simulation evidence on the efficacy of our proposed methods.
In particular, we use a wide variety of data generating processes, many of which do not assume that units arrive i.i.d. as is often standard in simulation settings for this problem.
Subjects are unlikely to arrive truly i.i.d. in the real world.
Earlier arrivals will typically be more active than late-arriving units, for example.

All data generating processes used in simulations are shown in Table~\ref{tab:dgps}.
If not otherwise specified, the sample size is $1000$ subjects, the number of groups is two and the marginal probability of treatment is $\frac{1}{2}$.

Methods compared in the simulations are simple randomization (Bernoulli coin flips), complete randomization (fixed-margins randomization), the biased coin designs of \citet{efron1971forcing} and of \citet{smith1984sequential}, QuickBlock of \citet{higgins2016improving}, and \citet{alweiss2020discrepancy}.
These are compared to our proposed methods which are generalized versions of the discrepancy minimization procedure of \citet{dwivedi2021kernel}.
We provide three versions of our proposed algorithms, called BWD for "Balancing Walk Design". 
The most basic version (BWDRandom) reverts to simple random assignment for all remaining periods once $|\ww_{i-1}^{\top} \xx_i| > c$.
Our preferred approach restarts the algorithm in this case as described in section~\ref{sec:restart}.
We examine this with two levels of robustness, $\phi$: 0 (purely balancing) and 0.5 (a uniform mix between randomization and balancing): BWD(0) and BWD(0.5) respectively. For further details, see section~\ref{sec:robust}.

In these comparisons, BWD need not out-perform all methods in all data-generating processes.
QuickBlock, for instance, is a fully off-line method, so comparable performance by an online method is noteworthy.
In general, Alweiss and BWD will be most effective when the true relationship between covariates and outcome is linear, since they seek linear balance.
All plots incorporate 95\% confidence intervals.
\begin{figure}
    \centering
    \includegraphics[width=0.4\textwidth]{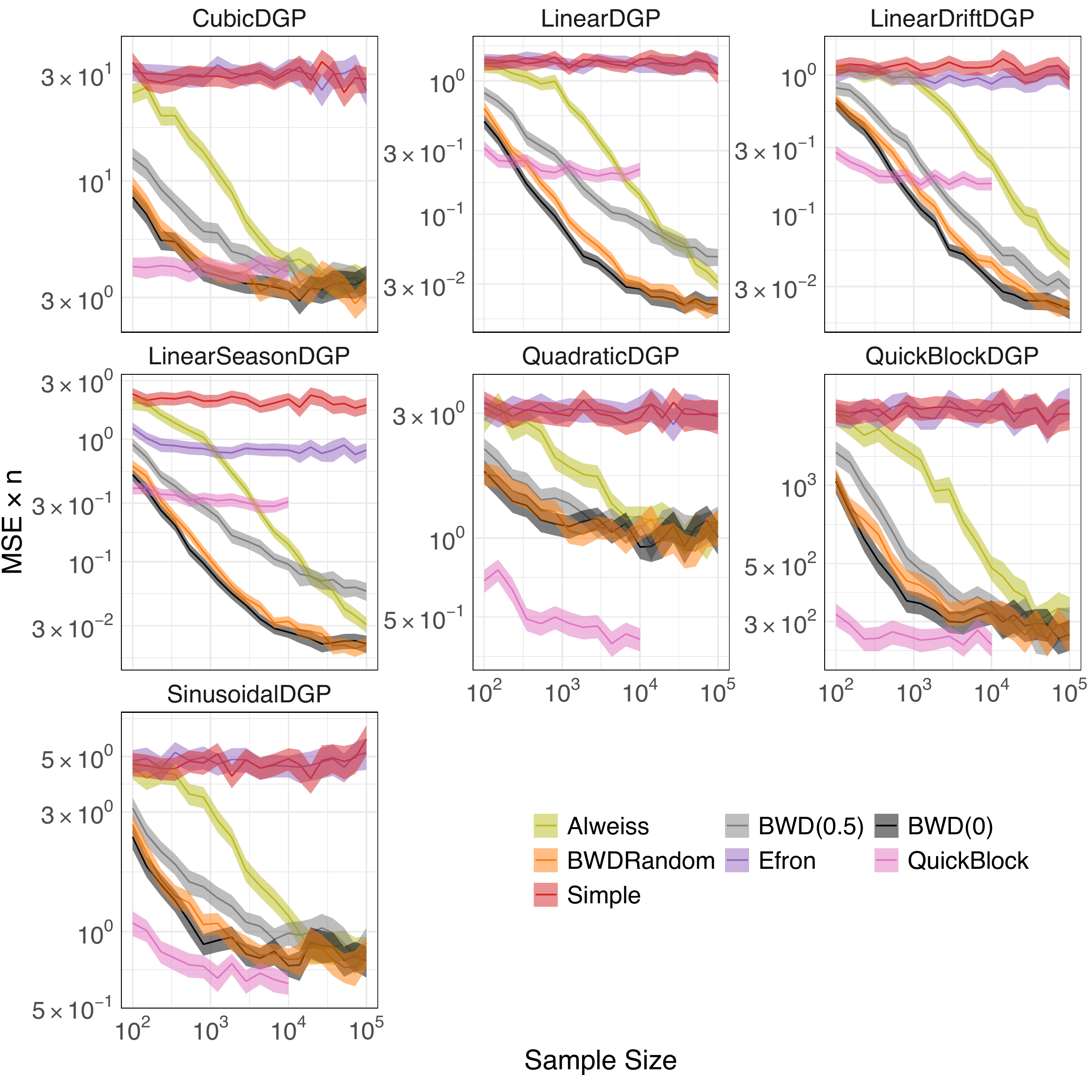}
    \caption{MSE. BWD provides effective variance reduction across a wide array of simulation environments.}
    \label{fig:mse}
\end{figure}
\subsection{Binary treatment}
\paragraph{Timing.} Our proposed method (BWD) is highly efficient, scaling substantially better than other balancing methods.
This analysis of runtime directly compares online methods to a widely used offline balancing method (QuickBlock).
It's important to note that the QuickBlock algorithm cannot be used in the online setting, even if its runtime did not make that prohibitive.
While QuickBlock is $\mathcal{O}(n \log n)$, the proposed online balancing methods are all linear-time.
Given QuickBlock's runtime, the following simulations only include it for comparisons up to sample sizes of $10^4$.

\paragraph{MSE.}
Next, we demonstrate in Figure~\ref{fig:mse} how imbalance minimization translates to improved estimation of causal effects, measured by the mean squared-error of our estimate of the average treatment effect.
We normalize this graph based on $n$, the rate of convergence of the difference-in-means estimator under simple randomization.
The results depend strongly on the true nature of the data-generating process.
In short, on non-linear data generating processes, offline blocking performs better than anything else, but in many settings BWD converges to similar error rates as QuickBlock.
On linear or near-linear data-generating processes, our proposed algorithms perform very strongly, outperforming QuickBlock even in small sample-sizes.
When there is a break from the purely i.i.d. stochastic setting (such as \texttt{LinearDriftDGP} and \texttt{LinearSeasonDGP}), BWD behaves well, as expected.

\subsection{Multiple Treatments}

\paragraph{MISE.}
BWD gracefully extends to the multiple-treatment setting, which we demonstrate in Figure~\ref{fig:mse-multi}.
This chart measures the mean integrated squared-error of our estimates of the ATEs (relative to a single control group).
Figure~\ref{fig:mse-multi} shows the results.
BWD consistently outperforms existing online assignment methods by substantial margins.

\begin{figure}
    \centering
    \includegraphics[width=0.4\textwidth]{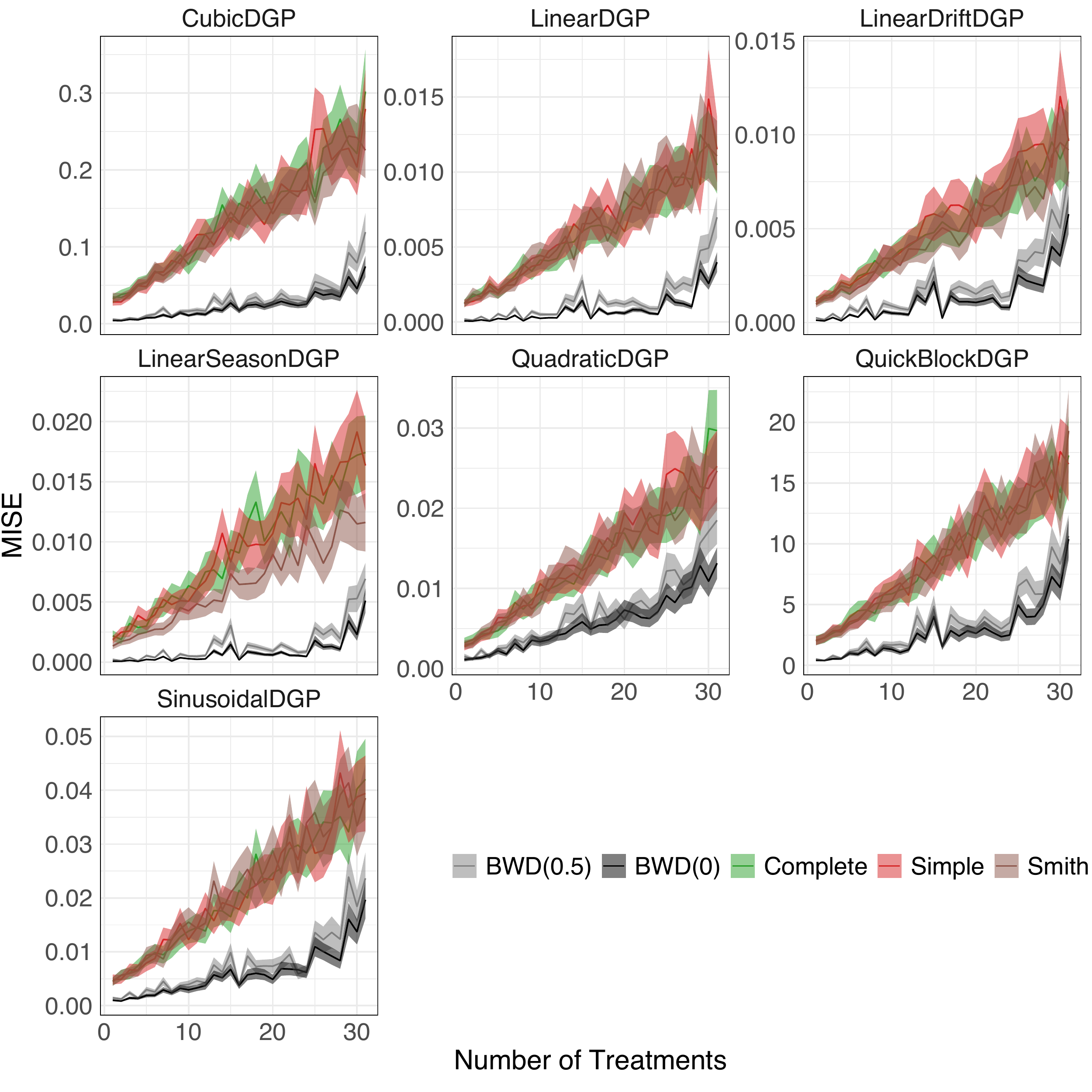}
    \caption{MISE. BWD effective reduces variance no matter the number of treatments.}
    \label{fig:mse-multi}
\end{figure}

An array of additional simulation results may be found in Appendix~\ref{app:sims}.

\section{Conclusion}

Experiments are a crucial part of how humans learn about the world and make decisions.
This paper is aimed at providing a way to more effectively run experiments in the online setting.
Practitioners must commonly operate their experiments in this environment, but due to the lack of suitable options for design fall back to simple randomization as the assignment mechanism.
In this paper, we have shown how the Balancing Walk Design can be an effective tool in this setting.
It is efficient, effective at reducing imbalance (and, therefore, the variance of resulting causal estimates), robust and it is fully suited to the particularities of online treatment assignment.
Simulations have shown it to work well across a range of diverse settings.
The Balancing Walk Design can improve the practice of large-scale online experimentation.

\begin{appendix}
\setcounter{figure}{0}
\setcounter{table}{0}
\renewcommand\thefigure{A\arabic{figure}} 
\renewcommand\thetable{A\arabic{table}} 

\section{Proofs}

\subsection{Proof of Theorem~\ref{thm:main}}

\begin{restatable}[Loewner Order]{proposition}{loewner}
\label{prop:loewner}
Let $\MM \succeq 0, C \geq 1, L \geq 0 .$ If $$\MM \preceq CL\boldsymbol{P} \coloneqq CL\boldsymbol{B}^\top\left(\boldsymbol{B}^\top\boldsymbol{B}\right)^{-1}\boldsymbol{B}$$ then for any vector $\vv \in \mathbb{R}^{n}$ with $\|\vv\|_{2} \leq 1$
$$
\MM^{\prime}=\left(\II-C^{-1} \vv \vv^{\top}\right) \MM\left(\II-C^{-1} \vv \vv^{\top}\right)+L \vv \vv^{\top}
$$
satisfies $$0 \preceq \MM^{\prime} \preceq CL\boldsymbol{P}' \coloneqq CL\boldsymbol{P} + CL\frac{\left(\II - \boldsymbol{P}\right)\vv \vv^\top\left(\II - \boldsymbol{P}\right)}{\vv^\top\left(\II - \boldsymbol{P}\right)\vv}.$$
\end{restatable}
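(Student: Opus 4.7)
The plan is to handle the two inequalities separately, with the lower bound being immediate and the upper bound reduced to a single algebraic identity that exploits $\PP'\vv = \vv$.

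For $\MM' \succeq 0$, the summand $(\II - C^{-1}\vv\vv^\top)\MM(\II - C^{-1}\vv\vv^\top)$ is a symmetric conjugation of $\MM \succeq 0$ and hence PSD, while $L\vv\vv^\top \succeq 0$ since $L \geq 0$. This disposes of the lower bound without using the hypothesis $\MM \preceq CL\PP$ or the bound $\|\vv\| \leq 1$.

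For the upper bound, the key observation is that $\PP'$ is the orthogonal projection onto $\mathrm{range}(\PP) + \mathrm{span}(\vv)$: the rank-one correction $(\II - \PP)\vv\vv^\top(\II - \PP)/\vv^\top(\II - \PP)\vv$ is precisely the projector onto the one-dimensional orthogonal complement of $\mathrm{range}(\PP)$ inside that sum. In particular $\PP'\vv = \vv$ and $\vv^\top\PP'\vv = \|\vv\|^2$. Using these I would compute
\[
(\II - C^{-1}\vv\vv^\top)(CL\PP')(\II - C^{-1}\vv\vv^\top) = CL\PP' - 2L\vv\vv^\top + C^{-1}L\|\vv\|^2\vv\vv^\top,
\]
and subtracting the definition of $\MM'$ rearranges into the identity
\[
CL\PP' - \MM' = (\II - C^{-1}\vv\vv^\top)\,(CL\PP' - \MM)\,(\II - C^{-1}\vv\vv^\top) + L\bigl(1 - C^{-1}\|\vv\|^2\bigr)\vv\vv^\top.
\]
Both terms on the right are PSD: the trailing rank-one piece because $\|\vv\|^2 \leq 1 \leq C$ yields $1 - C^{-1}\|\vv\|^2 \geq 0$; and the conjugation piece because $CL\PP' - \MM = (CL\PP - \MM) + CL(\PP' - \PP)$ is the sum of a PSD matrix (the hypothesis) and a scaled projector, and symmetric conjugation by $\II - C^{-1}\vv\vv^\top$ preserves positivity.

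The only real obstacle is spotting this decomposition. A direct expansion of $\MM'$ in the basis $\vv = \PP\vv + (\II - \PP)\vv$ produces several cross terms that must be matched against the rank-one correction in $\PP'$, and it is not immediately clear which pieces should cancel. Recognizing that $\PP'\vv = \vv$ lets one conjugate $CL\PP'$ by exactly the same matrix that conjugates $\MM$, so the desired Loewner inequality collapses to the identity above plus two sign checks.
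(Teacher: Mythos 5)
Your proof is correct, but it takes a genuinely different route from the paper's. The paper first divides through by $CL$, absorbs $C$ into $\vv$ via the substitution $\vv \mapsto \vv/\sqrt{C}$ to reduce to the case $C=1$, and then expands everything in the decomposition $\vv = \aaa + \bb$ with $\aaa = \PP\vv$, $\bb = (\II-\PP)\vv$; after multiplying by $\beta = \|\bb\|^2$ the difference of the two sides is massaged into $(\alpha\bb - \beta\aaa)(\alpha\bb - \beta\aaa)^\top$ plus terms with nonnegative coefficients, i.e.\ the positivity is certified by completing a square. You instead isolate the structural fact $\PP'\vv = \vv$ and use it to conjugate $CL\PP'$ by the same matrix $\II - C^{-1}\vv\vv^\top$ that appears in $\MM'$; I checked your identity
\[
CL\PP' - \MM' = \bigl(\II - C^{-1}\vv\vv^\top\bigr)\bigl(CL\PP' - \MM\bigr)\bigl(\II - C^{-1}\vv\vv^\top\bigr) + L\bigl(1 - C^{-1}\|\vv\|^2\bigr)\vv\vv^\top
\]
and it holds, with both right-hand terms PSD since $\MM \preceq CL\PP \preceq CL\PP'$ and $C^{-1}\|\vv\|^2 \leq 1$. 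Your argument is shorter, handles general $C \geq 1$ without the rescaling step, and explicitly disposes of the lower bound $\MM' \succeq 0$, which the paper's proof silently omits; the paper's computation, in exchange, exhibits the exact PSD slack term. Both proofs share the same unaddressed degenerate case $(\II-\PP)\vv = 0$, where the rank-one correction in $\PP'$ is $0/0$ (there one should read $\PP' = \PP$, and your identity still goes through since $\PP\vv = \vv$), so this is not a gap relative to the paper.
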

\begin{proof}
By definition of $\MM'$ and the assumption $\MM \preceq cL\boldsymbol{P},$ we have 
\[
\MM^{\prime} \preceq CL \left(\II-C^{-1} \vv \vv^{\top}\right) \boldsymbol{P}\left(\II-C^{-1} \vv \vv^{\top}\right)+L \vv \vv^{\top}.
\] 
Therefore, it is sufficient to prove
\[
\left(\II-C^{-1} \vv \vv^{\top}\right) \boldsymbol{P}\left(\II-C^{-1} \vv \vv^{\top}\right)+ C^{-1} \vv \vv^{\top} \preceq
 \boldsymbol{P} + \frac{\left(\II - \boldsymbol{P}\right)\vv\vv^\top\left(\II - \boldsymbol{P}\right)}{\vv^\top\left(\II - \boldsymbol{P}\right)\vv}.
\]
Also note that, since $\|\vv\| \leq 1$ and $C \geq 1,$ we can absorb $C$ into $\vv$ (by taking $\vv \coloneqq \vv/\sqrt{C}$) and therefore without loss of generality assume that $C=1$ and $\|\vv\| \leq 1.$
Define:
\begin{align*}
    &\Px = \boldsymbol{B}^\top\left(\boldsymbol{B}^\top\boldsymbol{B}\right)^{-1}\boldsymbol{B}\quad
    &\Qx \coloneqq \left(\II - \boldsymbol{P}\right)\quad
    &\Pv = \vv\vv^\top\quad
    &\Qv = \left(\II - \vv\vv^\top\right)\\
    &\aaa \coloneqq \boldsymbol{P}_x \vv\quad
    &\bb \coloneqq \boldsymbol{Q}_x \vv\quad
    &\alpha \coloneqq \|\aaa\|^2\quad
    &\beta \coloneqq \|\bb\|^2
\end{align*}
We want to show
\begin{align*}
    \Qv\Px\Qv + \Pv \preccurlyeq
    \Px + \frac{\Qx\Pv\Qx}{\beta}
\end{align*}
Beginning by rewriting the LHS
\begin{align*}
    &\Qv\Px\Qv + \Pv = (\II - \Pv)\Px(\II - \boldsymbol{P}_v) + \Pv\\
    &=\Px - \Pv\Px - \Px\Pv + \Pv\Px\Pv + \Pv\\
    &=\Px - \vv\aaa^\top - \aaa\vv^\top + \vv\vv^\top\alpha + \Pv\\
    &=\Px - (\aaa + \bb)\aaa^\top - \aaa(\aaa + \bb)^\top + (\aaa + \bb)(\aaa + \bb)^\top\alpha + \Pv
\end{align*}
Expanding $\Pv$ as $\Pv = \left(\aaa + \bb\right)\left(\bb + \aaa\right)^\top = \aaa\aaa^\top + \bb\bb^\top + \aaa\bb^\top + \bb\aaa^\top$ gives
\begin{align*}
    &\Px - \left(\aaa + \bb\right)\aaa^\top - \aaa\left(\aaa + \bb\right)^\top + (1 + \alpha)\left(\aaa\aaa^\top + \bb\bb^\top + \aaa\bb^\top + \bb\aaa^\top\right)\\
    = &\Px + (\alpha - 1)\aaa\aaa^\top + \alpha\left(\aaa\bb^\top + \bb\aaa^\top\right) + (1 + \alpha)\bb\bb^\top
\end{align*}
Since $\| \vv\|^2 \leq 1$, we have $\alpha + \beta \leq 1.$ Now considering the difference of the LHS from the RHS after multiplying both sides by $\beta$ we arrive at
\begin{align*} \beta (\text{RHS} - \text{LHS}) =
    &\bb\bb^\top(\underbrace{1 - \beta(1 + \alpha)}_{1 - \beta - \beta\alpha \geq \alpha(1 - \beta) \geq \alpha^2}) + \beta(1 - \alpha)\aaa\aaa^\top - \alpha\beta(\aaa\bb^\top + \bb\aaa^\top)\\
    \succcurlyeq &\alpha^2\bb\bb^\top + \beta^2\aaa\aaa^\top - \alpha\beta(\aaa\bb^\top + \bb\aaa^\top)\\
    = &\left(\alpha \bb - \beta \aaa\right)\left(\alpha \bb - \beta \aaa\right)^\top \succcurlyeq 0.
\end{align*}
\end{proof}

\begin{lemma}
When $|\ww_{i-1}^{\top} {\xx}_i| < c$, we have
$$ \E \left[ \eta_i \right] = - 2 q \ww_{i-1}^{\top} {\xx}_i / c .$$
\end{lemma}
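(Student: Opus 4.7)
The plan is to compute $\E[\eta_i]$ directly from its two-point distribution in Algorithm~\ref{alg:main}, since the hypothesis $|\ww_{i-1}^{\top}\xx_i| < c$ is precisely what puts us in the randomized branch where $\eta_i$ is defined as a $\{-2q,\, 2(1-q)\}$-valued random variable. There is no real obstacle here; the entire content is an arithmetic simplification.

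Concretely, I would first abbreviate the upper probability as $p := q\bigl(1 - \ww_{i-1}^{\top}\xx_i/c\bigr)$, so that $\eta_i = 2(1-q)$ with probability $p$ and $\eta_i = -2q$ with probability $1-p$. Note that $p \in [0,1]$ under the hypothesis, which is what makes this a legitimate probability (worth a one-line remark since it is the only place the bound $|\ww_{i-1}^{\top}\xx_i|<c$ is used beyond ensuring the algorithm is in the randomized branch).

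Then I would compute
\begin{align*}
\E[\eta_i] &= 2(1-q)\,p + (-2q)(1-p) \\
&= 2p(1-q) + 2qp - 2q \\
&= 2p - 2q.
\end{align*}
Substituting back $p = q\bigl(1 - \ww_{i-1}^{\top}\xx_i/c\bigr)$ gives
\[
\E[\eta_i] = 2q\bigl(1 - \ww_{i-1}^{\top}\xx_i/c\bigr) - 2q = -\,2q\,\ww_{i-1}^{\top}\xx_i/c,
\]
which is the claim. The calculation also makes transparent why the probabilities in the algorithm are chosen the way they are: the weights $2(1-q)$ and $-2q$ together with the probability $p$ are precisely calibrated so that the marginal mean is a linear function of $\ww_{i-1}^{\top}\xx_i$ with the proportionality constant $-2q/c$ needed downstream (for instance, to identify $z_i = \eta_i + 2q - 1$ as a mean-$(2q-1)$ variable and to drive the martingale-style arguments in Theorem~\ref{thm:main}).
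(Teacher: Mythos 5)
Your proof is correct and is essentially identical to the paper's: both compute $\E[\eta_i]$ directly from the two-point distribution and simplify, yours just routing through the abbreviation $p$ to get $\E[\eta_i]=2p-2q$ first. One small caveat: your parenthetical that $p\in[0,1]$ under the hypothesis only holds when $q\le 1/2$ (in general $|\ww_{i-1}^{\top}\xx_i|<c$ gives $p\in(0,2q)$, as the paper itself notes for $\tilde q$), but this does not affect the expectation calculation.
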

\begin{proof}
\begin{align*}
    \E \left[ \eta_i \right] &= 2(1-q) \cdot (q(1 - \ww_{i-1}^{\top} {\xx}_i / c)) +
    (-2q) \cdot (1- q(1 - \ww_{i-1}^{\top} {\xx}_i / c)) \\
    &= 0 + 2(1-q)(- q \ww_{i-1}^{\top} {\xx}_i / c) +
    (-2q)(q \ww_{i-1}^{\top} {\xx}_i / c) \\
    &= - 2 q \ww_{i-1}^{\top} {\xx}_i / c.
\end{align*}
\end{proof}

For all $i$ and $\uu \in \mathbb{R}^d$, we have
\begin{align*}
  \langle \ww_i, \uu \rangle &= \left \langle \ww_{i-1}, 
\uu - 2 q  \frac{\xx_i^{\top} \uu}{c} \xx_i \right \rangle + \epsilon_i \xx_i^{\top} \uu\\
  & = \langle \ww_i, \left( \II - \frac{2q}{c} \xx_i \xx_i^{\top} \right) \uu   \rangle + \epsilon_i \xx_i^{\top} \uu \\
  & = \langle \ww_i, \QQ_{\xx_i, c} \uu
  \rangle + \epsilon_i \xx_i^{\top} \uu,
\end{align*}
where 
$\epsilon_i =0$ if $|\ww_{i-1}^{\top}\xx_i| > c$ and $\epsilon_i = \eta_i - E [\eta_i]$ otherwise and $\QQ_{\xx_i, c} = \left( \II - \frac{2q}{c} \xx_i \xx_i^{\top} \right).$

Consider the case when $|\ww_{i-1}^{\top} \xx_i| \leq c$ and $\epsilon_i = \eta_i - E [\eta_i]$. Let $\tilde{q} = q(1 - \ww_{i-1}^{\top} {\xx}_i / c).$ Note that $0 \leq \tilde{q} \leq 2q.$ We have

$$\epsilon_i \gets \begin{cases}
     2(1 - \tilde{q}) & \text{ with probability }  \tilde{q},\\
     -2\tilde{q}& \text{ with probability } 
     1- \tilde{q}.
     \end{cases}
$$

\begin{definition}
[Sub-Gaussian]
A mean zero random variable $X$ is sub-Gaussian with parameter $\sigma$ if for all $\lambda \in \mathbb{R}$, 
\[
\E [ \exp\br{ \lambda X } ] \leq \exp \br{ \frac{\lambda^2 \sigma^2}{2}}.
\]

A mean zero random vector $\ww$ is $(\sigma, \PP)$ sub-Gaussian if for all unit vectors $\uu$ and $\lambda \in \mathbb{R}$,
\[
\E [ \exp\br{ \lambda \ww^{\top} \uu } ] \leq \exp \br{ \frac{\lambda^2 \sigma^2 \uu^{\top} \PP \uu }{2}}.
\]

In particular, $\ww^{\top} \uu$ is $\sigma'$ sub-Gaussian, where $\sigma'^2 = \sigma^2 \uu^{\top} \PP \uu$.
\end{definition}

\begin{definition}
[Sub-exponential] 
A mean zero random variable $X$ is $(\nu, \alpha)$ sub-exponential if for all $|\lambda| \leq \frac{1}{\alpha}$,
\[
\E [ \exp\br{ \lambda X } ] \leq \exp \br{ \frac{\lambda^2 \nu^2}{2}}.
\]

A mean zero random vector $\ww$ is $(\nu, \alpha, \PP)$ sub-exponential if for all unit vectors $\uu$ and $|\lambda| \leq \frac{1}{\alpha \sqrt{ \uu^{\top} \PP \uu}}$,
\[
\E [ \exp\br{ \lambda \ww^{\top} \uu } ] \leq \exp \br{ \frac{\lambda^2 \nu^2 \uu^{\top} \PP \uu }{2}}.
\]

In particular, $\ww^{\top} \uu$ is $(\nu', \alpha')$ sub-exponential, where $\nu'^2 = \nu^2 \uu^{\top} \PP \uu$ and $\alpha' = \alpha \sqrt{\uu^{\top} \PP \uu}$.
\end{definition}

The following concentration bounds for sub-Gaussian and sub-exponential vectors are obtained from standard bounds for scalar sub-Gaussian and sub-exponential random variables \cite{wainwright_2019} by scaling $\sigma$, $\nu$ and $\alpha$ by appropriate factors.  

\begin{lemma}[Sub-Gaussian Concentration]
\label{lem:subgaussian-bound}
If a random vector $\ww$ is $(\sigma, \PP)$ sub-Gaussian, then for all unit vectors $\uu$,
$$
P\br{|\ww^{\top}\uu| \geq t} \leq 
2 \exp \br{-\frac{t^2}{2\sigma^2 \uu^{\top} \PP \uu}}.
$$
\end{lemma}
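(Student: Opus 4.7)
The plan is to reduce the vector-valued statement to the standard scalar sub-Gaussian tail bound, since the definition of $(\sigma,\PP)$ sub-Gaussian is precisely engineered to let us do this. First, I observe that for a fixed unit vector $\uu$, the scalar random variable $X \coloneqq \ww^{\top}\uu$ is itself sub-Gaussian: the defining inequality
$$\E[\exp(\lambda \ww^{\top}\uu)] \leq \exp\!\left(\tfrac{\lambda^2 \sigma^2 \uu^{\top}\PP\uu}{2}\right)$$
holds for all $\lambda \in \mathbb{R}$, which says exactly that $X$ is sub-Gaussian with parameter $\sigma' \coloneqq \sigma \sqrt{\uu^{\top}\PP\uu}$ (as already noted after the definition).

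Next, I apply the standard Chernoff argument. For any $\lambda > 0$, Markov's inequality gives
$$P(X \geq t) \;\leq\; e^{-\lambda t}\, \E[e^{\lambda X}] \;\leq\; \exp\!\left(-\lambda t + \tfrac{\lambda^2 \sigma'^2}{2}\right).$$
Optimizing over $\lambda > 0$ by choosing $\lambda = t/\sigma'^2$ yields
$$P(X \geq t) \;\leq\; \exp\!\left(-\tfrac{t^2}{2\sigma'^2}\right) \;=\; \exp\!\left(-\tfrac{t^2}{2\sigma^2 \uu^{\top}\PP\uu}\right).$$

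Finally, I apply the same argument to $-X = \ww^{\top}(-\uu)$, which is sub-Gaussian with the same parameter since the definition is in terms of the unit vector $\uu$ and $\|-\uu\|=1$, giving $P(-X \geq t) \leq \exp(-t^2/(2\sigma^2 \uu^{\top}\PP\uu))$. A union bound over the events $\{X \geq t\}$ and $\{-X \geq t\}$ produces the factor of $2$ and completes the proof. There is no real obstacle here; the statement is a direct scalar Chernoff bound dressed in vector notation, and the only content is recognizing that the quadratic form $\uu^{\top}\PP\uu$ plays the role of the scalar sub-Gaussian parameter squared.
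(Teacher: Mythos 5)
Your proof is correct and matches the paper's approach: the paper simply cites the standard scalar sub-Gaussian tail bound (Wainwright, 2019) applied to $\ww^{\top}\uu$ with parameter $\sigma\sqrt{\uu^{\top}\PP\uu}$, and your Chernoff-plus-union-bound argument is exactly that standard derivation written out.
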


\begin{lemma}[Sub-exponential Concentration]
\label{lem:subexp-bound}
If a random vector $\ww$ is $(\nu, \alpha, \PP)$ sub-exponential, then for all unit vectors $\uu$,
$$
P\br{|\ww^{\top}\uu| \geq t} \leq 
\begin{cases}
2\exp \br{-\frac{t^2}{2 \nu^2 \uu^{\top} \PP \uu}} & \text{ if } 0 \leq t \leq \frac{\nu^2 \sqrt{ \uu^{\top} \PP \uu}}{\alpha} \\
2\exp \br{- \frac{t}{2 \alpha \sqrt{\uu^{\top} \PP \uu}}} & \text{ if } t > \frac{\nu^2 \sqrt{\uu^{\top} \PP \uu}}{\alpha}.
\end{cases}
$$
\end{lemma}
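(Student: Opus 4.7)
The plan is to reduce the vector statement to the scalar case, exactly as the paper flags in the sentence preceding the lemma. Since the definition of $(\nu, \alpha, \PP)$ sub-exponential says precisely that for every unit vector $\uu$ the scalar $\ww^{\top}\uu$ is $\br{\nu', \alpha'}$ sub-exponential with $\nu' = \nu\sqrt{\uu^{\top}\PP\uu}$ and $\alpha' = \alpha\sqrt{\uu^{\top}\PP\uu}$, the only real work is to prove the standard scalar Bernstein-type tail bound
$$
P\br{|X| \geq t} \leq
\begin{cases}
2\exp\br{-\tfrac{t^2}{2\nu'^2}} & \text{if } 0 \leq t \leq \nu'^2/\alpha',\\
2\exp\br{-\tfrac{t}{2\alpha'}} & \text{if } t > \nu'^2/\alpha',
\end{cases}
$$
and then substitute.

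For the scalar bound, I would apply the Chernoff method. Fix $\lambda \in [0, 1/\alpha']$ and use Markov's inequality on $\exp(\lambda X)$ to obtain $P(X \geq t) \leq \exp\br{\lambda^2 \nu'^2/2 - \lambda t}$. Now optimize: the unconstrained minimizer is $\lambda^{*} = t/\nu'^2$. When $t \leq \nu'^2/\alpha'$ this minimizer satisfies $\lambda^{*} \leq 1/\alpha'$, so it is feasible and yields $\exp(-t^2/(2\nu'^2))$. When $t > \nu'^2/\alpha'$ the unconstrained minimizer is infeasible, so I take the boundary choice $\lambda = 1/\alpha'$, which yields $\exp\br{\nu'^2/(2\alpha'^2) - t/\alpha'}$; and in this regime $\nu'^2/(2\alpha'^2) \leq t/(2\alpha')$, so the bound simplifies to $\exp(-t/(2\alpha'))$. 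A symmetric argument on $-X$ (which is also $(\nu',\alpha')$ sub-exponential by symmetry of the defining inequality in $\lambda$) handles the lower tail, and a union bound contributes the factor of $2$.

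Substituting $\nu' = \nu\sqrt{\uu^{\top}\PP\uu}$ and $\alpha' = \alpha\sqrt{\uu^{\top}\PP\uu}$ reproduces the lemma exactly: the threshold $\nu'^2/\alpha'$ becomes $\nu^2 \sqrt{\uu^{\top}\PP\uu}/\alpha$, the Gaussian exponent becomes $-t^2/(2\nu^2 \uu^{\top}\PP\uu)$, and the exponential exponent becomes $-t/(2\alpha\sqrt{\uu^{\top}\PP\uu})$.

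There is essentially no obstacle here; this is a textbook Bernstein/Bernstein-style Chernoff argument and the paper itself cites Wainwright. The only delicate point is checking the case split at the threshold $t = \nu'^2/\alpha'$, i.e., verifying feasibility of $\lambda^{*} = t/\nu'^2$ against the constraint $|\lambda| \leq 1/\alpha'$ and confirming that the piecewise bound is continuous and matches the two forms at the threshold. Everything else is routine.
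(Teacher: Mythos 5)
Your proposal is correct and matches the paper's approach: the paper simply states that the lemma follows from the standard scalar sub-exponential (Bernstein-type) tail bound in Wainwright by rescaling $\nu$ and $\alpha$ by $\sqrt{\uu^{\top}\PP\uu}$, which is precisely the reduction you carry out. Your Chernoff computation, the feasibility check at the threshold $t = \nu'^2/\alpha'$, and the symmetry-plus-union-bound step for the two-sided bound are all sound.
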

 
\begin{lemma}
\label{lem:subexp-var}
Suppose $X$ is a $(\nu, \alpha)$ sub-exponential random variable with $\frac{2}{\nu^2} \leq \frac{1}{\alpha^2}$. Then 
\[
\Var(X) \leq \frac{3}{2}\nu^2.
\]
\end{lemma}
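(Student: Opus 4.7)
The plan is to upper-bound $\E[X^2]$ via the sub-exponential MGF inequality combined with the elementary Taylor-series bound $\cosh(y) \geq 1 + y^2/2$, and then to convert this into a variance bound using the mean-zero property baked into the definition of sub-exponential. Since $X$ is mean zero, $\Var(X) = \E[X^2]$, so it suffices to control the second moment.

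First, I apply the sub-exponential MGF bound at both $\lambda$ and $-\lambda$ (valid for $|\lambda| \leq 1/\alpha$) and average:
\[
\E[\cosh(\lambda X)] \;=\; \frac{\E\brs{e^{\lambda X}} + \E\brs{e^{-\lambda X}}}{2} \;\leq\; e^{\lambda^2 \nu^2 / 2}.
\]
Combining this with the pointwise inequality $\cosh(\lambda X) \geq 1 + \lambda^2 X^2 / 2$, which is immediate from the Taylor series of $\cosh$, and taking expectations gives
\[
1 + \tfrac{1}{2}\lambda^2 \E[X^2] \;\leq\; e^{\lambda^2 \nu^2 / 2},
\]
which rearranges to $\E[X^2] \leq 2\br{e^{\lambda^2 \nu^2/2} - 1}/\lambda^2$ for every admissible $\lambda$.

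Second, I pick $\lambda$ carefully. The hypothesis $2/\nu^2 \leq 1/\alpha^2$ implies $1/\nu \leq \sqrt{2}/\nu \leq 1/\alpha$, so $\lambda = 1/\nu$ is in the admissible range. Substituting yields $\E[X^2] \leq 2\nu^2\br{e^{1/2} - 1}$. A short Taylor-remainder argument shows $e^y \leq 1 + y + y^2$ on $[0,1]$, and at $y = 1/2$ this gives $e^{1/2} \leq 7/4$, so $e^{1/2} - 1 \leq 3/4$ and $\E[X^2] \leq 3\nu^2/2$, as required.

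The only delicate step is the choice of $\lambda$. The maximal admissible $\lambda = 1/\alpha$ gives a bound $2\alpha^2\br{e^{\nu^2/(2\alpha^2)} - 1}$ that exceeds $(e-1)\nu^2 \approx 1.72\,\nu^2 > 3\nu^2/2$ under the hypothesis, and the same is true of $\lambda = \sqrt{2}/\nu$. It is the particular value $\lambda = 1/\nu$, paired with the quadratic upper bound on $e^{1/2}$, that lands exactly at the stated constant $3/2$; the hypothesis $2/\nu^2 \leq 1/\alpha^2$ is invoked precisely to guarantee that this $\lambda$ lies in the valid MGF range.
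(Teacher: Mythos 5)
Your proof is correct, and while it shares the paper's overall strategy---symmetrize the MGF bound at $\pm\lambda$ and compare $X^2$ against hyperbolic cosine---the key comparison inequality and the choice of $\lambda$ are genuinely different. The paper uses the global bound $x^2 \leq C\br{e^x + e^{-x}}$ with the numerically tuned constant $C = 1.5/e$, evaluates at $\lambda = \sqrt{2}/\nu$ (the extreme point permitted by the hypothesis $2/\nu^2 \leq 1/\alpha^2$), and lands exactly on $\tfrac{3}{2}\nu^2$; verifying that $C=1.5/e$ works requires a small optimization of $2\cosh(x)/x^2$. You instead use the Taylor truncation $\cosh(y) \geq 1 + y^2/2$, which needs no numerical verification, subtract off the constant $1$ before dividing by $\lambda^2$, and evaluate at the smaller point $\lambda = 1/\nu$. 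This buys two things: the intermediate bound $\E[X^2] \leq 2\nu^2(\sqrt{e}-1) \approx 1.30\,\nu^2$ is strictly sharper than the stated $\tfrac{3}{2}\nu^2$ (your closing step $e^{1/2}\le 7/4$ merely rounds it up to match the lemma), and you only actually need $\alpha \leq \nu$ rather than the full hypothesis $\alpha \leq \nu/\sqrt{2}$. Your closing remark about why $\lambda = 1/\alpha$ and $\lambda=\sqrt{2}/\nu$ fail is accurate \emph{for your comparison inequality}; it is worth noting that the paper's choice $\lambda = \sqrt{2}/\nu$ does succeed when paired with its different pointwise bound, so the two proofs are not interchangeable step-for-step. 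Both are valid; yours is the more elementary and marginally stronger of the two.
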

\begin{proof}
We will use the inequality $x^2 \leq C \br{e^x + e^{-x}}, \, \forall x$ and $C = 1.5/e.$ This gives 
\begin{align*}
     \text{Var}(\lambda X) & \leq  C \E \br{e^{\lambda X} + e^{- \lambda X}} \\
    & \leq 2C\exp(0.5 \lambda^2 \nu^2)
    , \text{ if } |\lambda| \leq \frac{1}{\alpha}.
\end{align*}
We therefore have
\begin{align*}
    \text{Var}(X) &\leq C \nu^2 \frac{\exp(0.5 \lambda^2 \nu^2)}{0.5 \lambda^2 \nu^2} \\
    &= eC \nu^2 \text{ when } 0.5 \lambda^2 \nu^2 = 1.
\end{align*}
We get the result with $C = 1.5/e.$
\end{proof}

\begin{lemma}
\label{lem:vector-var}
If $\ww$ is a is a mean zero $\br{\sigma, \PP}$ sub-Gaussian random vector, then $\Cov(\ww) \preceq \sigma^2 \PP$. If $\ww$ is a is a mean zero $\br{\nu, \alpha, \PP}$ sub-exponential random vector with $\frac{2}{\nu^2} \leq \frac{1}{\alpha^2}$, $\Cov(\ww) \preceq \frac{3}{2} \nu^2 \PP$.
\end{lemma}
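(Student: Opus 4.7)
The plan is to reduce both claims to one-dimensional statements about $\ww^{\top}\uu$ for an arbitrary unit vector $\uu$, and then invoke the scalar variance bounds already available. The key identity I will use throughout is $\uu^{\top}\Cov(\ww)\uu = \Var(\ww^{\top}\uu)$, so that the Loewner inequalities $\Cov(\ww) \preceq \sigma^2 \PP$ or $\Cov(\ww) \preceq \tfrac{3}{2}\nu^2 \PP$ are equivalent to the scalar inequalities $\Var(\ww^{\top}\uu) \leq \sigma^2 \uu^{\top}\PP\uu$ or $\Var(\ww^{\top}\uu) \leq \tfrac{3}{2}\nu^2 \uu^{\top}\PP\uu$ holding for every unit vector $\uu$.

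For the sub-Gaussian part, I first apply the definition: for every unit $\uu$ and every $\lambda \in \mathbb{R}$, the MGF of $\ww^{\top}\uu$ is dominated by $\exp(\lambda^2 \sigma^2 \uu^{\top}\PP\uu/2)$, so $\ww^{\top}\uu$ is a scalar sub-Gaussian random variable with parameter $\sigma\sqrt{\uu^{\top}\PP\uu}$. The standard fact that a mean-zero sub-Gaussian with parameter $\sigma'$ has variance at most $\sigma'^2$ (obtained by differentiating the MGF bound twice at $\lambda = 0$, or by comparing the series expansions of both sides) then gives $\Var(\ww^{\top}\uu) \leq \sigma^2 \uu^{\top}\PP\uu$. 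Since $\uu$ was arbitrary, this yields $\Cov(\ww) \preceq \sigma^2 \PP$.

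For the sub-exponential part, I use the same reduction: for every unit $\uu$ and $|\lambda| \leq 1/(\alpha\sqrt{\uu^{\top}\PP\uu})$, the MGF of $\ww^{\top}\uu$ is bounded by $\exp(\lambda^2 \nu^2 \uu^{\top}\PP\uu/2)$, so $\ww^{\top}\uu$ is a scalar $(\nu',\alpha')$ sub-exponential random variable with $\nu' = \nu\sqrt{\uu^{\top}\PP\uu}$ and $\alpha' = \alpha\sqrt{\uu^{\top}\PP\uu}$. I need to check the hypothesis $2/\nu'^2 \leq 1/\alpha'^2$ of Lemma \ref{lem:subexp-var}; the factors of $\uu^{\top}\PP\uu$ cancel, and this collapses to the given assumption $2/\nu^2 \leq 1/\alpha^2$ (and is vacuous if $\uu^{\top}\PP\uu = 0$, where both sides of the target inequality are zero). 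Applying Lemma \ref{lem:subexp-var} then gives $\Var(\ww^{\top}\uu) \leq \tfrac{3}{2}\nu'^2 = \tfrac{3}{2}\nu^2 \uu^{\top}\PP\uu$, from which $\Cov(\ww) \preceq \tfrac{3}{2}\nu^2 \PP$ follows.

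There is essentially no obstacle here; the only thing to be careful about is the degenerate case $\uu^{\top}\PP\uu = 0$, which must be handled separately (the MGF bound forces $\ww^{\top}\uu = 0$ almost surely, so the variance inequality holds trivially), and the bookkeeping of the parameters $(\nu',\alpha')$ when invoking Lemma \ref{lem:subexp-var} so that the sub-exponential condition really does transfer.
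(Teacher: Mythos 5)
Your proof is correct and follows essentially the same route as the paper's: reduce to the scalar identity $\uu^{\top}\Cov(\ww)\uu = \Var(\ww^{\top}\uu)$, bound the sub-Gaussian case by the standard variance bound for scalar sub-Gaussians, and invoke Lemma~\ref{lem:subexp-var} with $\nu' = \nu\sqrt{\uu^{\top}\PP\uu}$, $\alpha' = \alpha\sqrt{\uu^{\top}\PP\uu}$ for the sub-exponential case. Your extra care with the degenerate case $\uu^{\top}\PP\uu = 0$ is a minor refinement the paper leaves implicit.
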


\begin{proof}
For all unit vector $\uu$, we have
\[
\uu^{\top} \Cov(\ww) \uu = \uu^{\top} \E(\ww \ww^{\top}) \uu = \E[(\ww^{\top} \uu )^2] = \Var(\ww^{\top} \uu).
\]

By definition, if $\ww$ is a is a mean zero $\br{\sigma, \PP}$ sub-Gaussian random vector, $\ww^{\top} \uu$ is $\sigma^2 \uu^{\top} \PP \uu$ sub-Gaussian. Therefore, 
\[ \uu^{\top} \Cov(\ww) \uu = \Var(\ww^{\top} \uu) \leq \sigma^2 \uu^{\top} \PP \uu
\]
as desired. 

Similarly, if $\ww$ is a is a mean zero $\br{\nu, \alpha, \PP}$ sub-exponential random vector, $\ww^{\top} \uu$ is $(\nu \sqrt{\uu^{\top} \PP \uu}, \alpha \sqrt{\uu^{\top} \PP \uu})$ sub-exponential. By Lemma~\ref{lem:subexp-var}, 
\[ \uu^{\top} \Cov(\ww) \uu = \Var(\ww^{\top} \uu) \leq \frac{3}{2} \nu^2 \uu^{\top} \PP \uu.
\]
\end{proof}

\begin{lemma}
\label{lem:epsilon}
For all $i \in [n]$ 
\begin{enumerate}
    \item $\epsilon_i$ is sub-Gaussian with $\sigma = 1$, and
    \item $\epsilon_i$ is $(4\sqrt{Aq}, 2/B)$ sub-exponential for any $A,B > 0$ satisfying $e^x < 1 + x + Ax^2$ for $x < B$.
\end{enumerate}
\end{lemma}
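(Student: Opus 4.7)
The random variable $\epsilon_i$ is either identically zero (in the case $|\ww_{i-1}^{\top}\xx_i| > c$, where both claims are trivial) or a mean-zero two-point random variable that equals $2(1-\tilde q)$ with probability $\tilde q$ and $-2\tilde q$ with probability $1-\tilde q$, where $\tilde q = q(1 - \ww_{i-1}^{\top}\xx_i / c) \in [0, 2q]$. In the latter case, $\epsilon_i$ is supported in an interval of width exactly $2(1-\tilde q) - (-2\tilde q) = 2$, independent of $\tilde q$. This bounded two-point structure is the only fact about $\epsilon_i$ I will use.

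For Part 1, I would invoke Hoeffding's lemma directly: any mean-zero random variable supported in an interval of length $L$ is sub-Gaussian with parameter $L/2$. Here $L = 2$, so $\epsilon_i$ is sub-Gaussian with $\sigma = 1$.

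For Part 2, I would compute the moment generating function explicitly and factor out the mean. Writing
\[
M_{\epsilon_i}(\lambda) = \tilde q\, e^{2\lambda(1-\tilde q)} + (1-\tilde q)\, e^{-2\lambda \tilde q} = e^{-2\tilde q \lambda}\bigl[\,1 + \tilde q(e^{2\lambda} - 1)\,\bigr],
\]
the mean-zero property is encoded in the prefactor $e^{-2\tilde q \lambda}$. For $|\lambda| \le 1/\alpha = B/2$ we have $2\lambda < B$, and the hypothesis $e^x < 1 + x + Ax^2$ for $x < B$ gives $e^{2\lambda} - 1 \le 2\lambda + 4A\lambda^2$. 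Multiplying by $\tilde q \ge 0$ and then applying $1+y \le e^y$ yields
\[
1 + \tilde q(e^{2\lambda} - 1) \le 1 + 2\tilde q \lambda + 4A\tilde q \lambda^2 \le \exp\bigl(2\tilde q \lambda + 4A \tilde q \lambda^2\bigr).
\]
Substituting back, the $e^{-2\tilde q \lambda}$ factor cancels the linear term and leaves $M_{\epsilon_i}(\lambda) \le \exp(4A \tilde q \lambda^2) \le \exp(8Aq\lambda^2)$, where the last step uses $\tilde q \le 2q$. This matches the $(\nu,\alpha)$ sub-exponential definition with $\nu^2/2 = 8Aq$, that is $\nu = 4\sqrt{Aq}$, and $\alpha = 2/B$.

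The main step to get right is the sub-exponential bound: one must use the specific factorization so that the linear-in-$\lambda$ contribution from the mean-zero property cancels the linear term generated by the quadratic inequality, leaving a purely quadratic exponent. The improved dependence on $q$ (rather than on $1$, as a naive Hoeffding-style sub-exponential bound would give) is exactly the improvement made possible by the bound $\tilde q \le 2q$, and is what the paper highlights as the reason for not directly adopting the analysis of \citet{dwivedi2021kernel}.
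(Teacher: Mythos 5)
Your proposal is correct and follows essentially the same route as the paper: Hoeffding's lemma for the sub-Gaussian claim (the support has width exactly $2$), and for the sub-exponential claim the same factorization $\E[e^{\lambda\epsilon_i}] = e^{-2\tilde q\lambda}\bigl(1+\tilde q(e^{2\lambda}-1)\bigr)$ followed by $1+y\le e^y$ and the hypothesis $e^x<1+x+Ax^2$, with the linear term cancelling and $\tilde q\le 2q$ giving the final exponent $8Aq\lambda^2$. The only difference is the order in which you apply the two elementary inequalities, which is immaterial.
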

\begin{proof}
For the first claim, note that a random variable bounded in $[a, b]$ is sub-Gaussian with $\sigma^2 = \frac{(b-a)^2}{4}$. 
To prove second claim, we have
\begin{align*}
\exp(\lambda \epsilon_i) &= \tilde{q} \exp(2 \lambda(1-\tilde{q})) + (1-\tilde{q}) \exp(-2 \lambda \tilde{q}) \\
&= \exp(-2 \lambda \tilde{q}) (1 + \tilde{q}(\exp(2 \lambda)-1) ) \\
& < \exp \br{-2 \lambda \tilde{q}} \exp \br{\tilde{q}(\exp(2 \lambda)-1) )} \\
& \leq \exp \br{\tilde{q} A (2 \lambda)^2},
\end{align*}
for $2\lambda < B.$ The last step follows from $e^x \leq 1+ x + Ax^2$ for $x <B.$
Recall that $\tilde{q} < 2q$, we have
$$
\exp(\lambda \epsilon_i) \leq \exp \br{ \frac{16qA\lambda^2}{2}} 
$$
for $\lambda < B/2$ as desired.
\end{proof}

Let $\PP_i, \, i\in [n]$ be orthogonal projection matrices onto the span of $\{\xx_1,..,\xx_i\},$ that is, 
\[\PP_i := \PP_{i-1} + \frac{(\II - \PP_{i-1}) \xx_i \xx_i^{\top} (\II - \PP_{i-1})}{\|(\II - \PP_{i-1})\xx_i \|^2},
\]
with $\PP_0 = 0.$ 
\begin{lemma}
\label{lem:main}
Suppose $A,B > 0$ satisfy $e^x < 1 + x + Ax^2$ for all $x < B$. Let $\sigma^2 :=c/2q$, $\nu^2 := 8Ac$ and $\alpha := 2/B$. Then 
\begin{enumerate}
    \item If $\ww_{i-1}$ is $(\sigma, \PP_{i-1})$ sub-Gaussian, $\ww_i$ is $(\sigma, \PP_i)$ sub-Gaussian.
    \item If $\ww_{i-1}$ is $(\nu, \alpha, \PP_{i-1})$ sub-exponential, $\ww_i$ is $(\nu, \alpha, \PP_i)$ sub-exponential.
\end{enumerate}
\end{lemma}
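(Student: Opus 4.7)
The plan is induction on $i$, reduced in both cases to a single matrix inequality supplied by Proposition~\ref{prop:loewner}. The base case is trivial with $\ww_0 = 0$ and $\PP_0 = 0$. For the inductive step I will use the decomposition
$$\ww_i^\top \uu \;=\; \ww_{i-1}^\top\,\QQ_{\xx_i,c}\,\uu \;+\; \epsilon_i\,\xx_i^\top\uu,$$
with $\QQ_{\xx_i,c} = \II - (2q/c)\xx_i\xx_i^\top$, derived just before the statement; here $\epsilon_i = 0$ when $|\ww_{i-1}^\top\xx_i| > c$ and $\epsilon_i$ equals the centered Bernoulli increment $\eta_i - \E[\eta_i]$ otherwise.

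Condition first on the full history $\mathcal{F}_{i-1}$ together with $\xx_i$: then $\ww_{i-1}^\top\QQ_{\xx_i,c}\uu$ and $\xx_i^\top\uu$ are deterministic, and Lemma~\ref{lem:epsilon} yields $\E[\exp(\lambda\epsilon_i\xx_i^\top\uu)\mid\mathcal{F}_{i-1},\xx_i] \leq \exp(\tfrac{1}{2}\lambda^2(\xx_i^\top\uu)^2)$ in the sub-Gaussian case and $\exp(8Aq\,\lambda^2(\xx_i^\top\uu)^2)$, valid for $|\lambda\,\xx_i^\top\uu|\leq 1/\alpha$, in the sub-exponential case. Next, I integrate $\ww_{i-1}$ out conditional on $\mathcal{H}_i = \sigma(\xx_1,\ldots,\xx_i)$; by the oblivious-adversary identity $\ww_{i-1}\mid\mathcal{H}_i \stackrel{d}{=} \ww_{i-1}\mid\mathcal{H}_{i-1}$, the inductive hypothesis applied to the (non-unit) vector $\vv = \QQ_{\xx_i,c}\uu$ produces $\E[\exp(\lambda\ww_{i-1}^\top\vv)\mid\mathcal{H}_i] \leq \exp(\tfrac{1}{2}\lambda^2\sigma^2\vv^\top\PP_{i-1}\vv)$ and its sub-exponential analogue valid on $|\lambda|\leq 1/(\alpha\sqrt{\vv^\top\PP_{i-1}\vv})$. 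Multiplying, the induction closes in the sub-Gaussian case as soon as $\sigma^2\QQ_{\xx_i,c}\PP_{i-1}\QQ_{\xx_i,c} + \xx_i\xx_i^\top \preceq \sigma^2\PP_i$, and in the sub-exponential case as soon as $\nu^2\QQ_{\xx_i,c}\PP_{i-1}\QQ_{\xx_i,c} + 16Aq\,\xx_i\xx_i^\top \preceq \nu^2\PP_i$.

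Substituting $\sigma^2 = c/(2q)$ and $\nu^2 = 8Ac$ shows that both inequalities are equivalent to the single statement
$$\QQ_{\xx_i,c}\PP_{i-1}\QQ_{\xx_i,c} \;+\; (2q/c)\,\xx_i\xx_i^\top \;\preceq\; \PP_i,$$
which is exactly Proposition~\ref{prop:loewner} applied with $C = c/(2q)$, $L = 2q/c$, $\MM = \PP_{i-1}$, base projection $\PP_{i-1}$, and $\vv = \xx_i$: the normalization $\|\xx_i\|\leq 1$ gives $\|\vv\|\leq 1$; $C^{-1}\vv\vv^\top = (2q/c)\xx_i\xx_i^\top$ makes $\II - C^{-1}\vv\vv^\top$ coincide with $\QQ_{\xx_i,c}$; $CL = 1$ collapses the right-hand side; $C\geq 1$ follows from the choice of $c$; and the proposition's $\boldsymbol{P}'$ is precisely the rank-one update formula defining $\PP_i$.

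The final loose end is the admissible $\lambda$-range in the sub-exponential claim: one must upgrade the two restrictions $|\lambda|\leq 1/(\alpha\sqrt{\vv^\top\PP_{i-1}\vv})$ and $|\lambda|\leq 1/(\alpha|\xx_i^\top\uu|)$ to the single target $|\lambda|\leq 1/(\alpha\sqrt{\uu^\top\PP_i\uu})$. Both are immediate consequences of the same PSD inequality: $\vv^\top\PP_{i-1}\vv \leq \uu^\top\PP_i\uu$ directly, and $|\xx_i^\top\uu|\leq \sqrt{\uu^\top\PP_i\uu}$ because $\xx_i\in\mathrm{range}(\PP_i)$ together with $\|\xx_i\|\leq 1$ gives $\xx_i\xx_i^\top\preceq \PP_i$. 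The main obstacle is therefore producing the PSD inequality, which is precisely the content of the already-proven Proposition~\ref{prop:loewner}; everything else is routine MGF bookkeeping.
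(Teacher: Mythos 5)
Your proof is correct and follows essentially the same route as the paper: the same martingale-style MGF decomposition $\ww_i^\top\uu = \ww_{i-1}^\top\QQ_{\xx_i,c}\uu + \epsilon_i\xx_i^\top\uu$, the same appeal to Lemma~\ref{lem:epsilon} for the increment, and the same reduction to a Loewner inequality discharged by Proposition~\ref{prop:loewner}. The only differences are cosmetic and slightly to your credit: you fold the two cases into a single application of the proposition with $CL=1$ (the paper invokes it twice with $L=1$ and $L=16Aq$), and you are more explicit than the paper about the oblivious-adversary conditioning and about verifying the admissible $\lambda$-range in the sub-exponential case.
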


\begin{proof}
We have
\begin{align*}
    \E \left[ \exp( \lambda \ww_i^{\top} \uu) \right] &= \E \brs{ \E \brs{
    \exp( \lambda \ww_i^{\top} \uu) \big| \ww_{i-1} 
    } } \\
     &= \E \brs{ \E  \brs{ e^{ \lambda
    \left \langle \ww_{i-1}, 
\uu - 2 q  \frac{\xx_i^{\top} \uu}{c} \xx_i \right \rangle + \lambda \epsilon_i \xx_i^{\top} \uu} \big| \ww_{i-1}}} \\ 
& = \E \brs{ e^{ \lambda
     \langle \ww_{i-1}, 
\QQ_{\xx_i, c} \uu  \rangle} \E  \brs{ e^{\lambda \epsilon_i \xx_i^{\top} \uu} \big| \ww_{i-1}}}.
\end{align*}

First, suppose that $\ww_{i-1}$ is $(\sigma, \PP_{i-1})$ sub-Gaussian, we will prove that $\ww_i$ is $(\sigma, \PP_i)$ sub-Gaussian. By Lemma~\ref{lem:epsilon}, $\epsilon_i$ is 1-sub-Gaussian. Therefore, 
\begin{align*}
\E \brs{ e^{ \lambda
     \langle \ww_{i-1}, 
\QQ_{\xx_i, c} \uu  \rangle} \E  \brs{ e^{\lambda \epsilon_i \xx_i^{\top} \uu} | \ww_{i-1}}}
& \leq \E \brs{ e^{ \lambda
     \langle \ww_{i-1}, 
\QQ_{\xx_i, c} \uu \rangle} \cdot e^{\frac{1}{2} \lambda^2 \|\xx_i^{\top} \uu\|^2  }} \\
& \leq e^{ \frac{ \lambda^2  \sigma^2}{2} \br{ \QQ_{\xx_i, c} \uu}^{\top}  \PP_{i-1} \br{ \QQ_{\xx_i, c} \uu}  
+\frac{\lambda^2 }{2} (\uu^{\top} \xx_i \xx_i^{\top} \uu) }.
\end{align*}

We now consider the exponent (divided by the common factor $\lambda^2 $). It is sufficient to show 
\begin{align*}
     \frac{\sigma^2}{2} \br{ \QQ_{\xx_i, c} \uu}^{\top} \PP_{i-1} \br{ \QQ_{\xx_i, c} \uu}  
+ \frac{1}{2}u^{\top} \xx_i \xx_i^{\top} \uu  \leq \frac{\sigma^2}{2} \uu^{\top} \PP_i  \uu \, \, \forall \uu\\
 \iff   {\sigma^2} \QQ_{\xx_i, c} ^{\top}  \PP_{i-1} \QQ_{\xx_i, c}   
+ \xx_i \xx_i^{\top}  \preceq {\sigma^2} \PP_i.
\end{align*}

Using Proposition~\ref{prop:loewner} (with $L \leftarrow 1, C \leftarrow c/2q$) and assuming $ \sigma^2 = c/2q \geq 1$, we have
\begin{align*}
   \QQ_{\xx_i, c}  \PP_{i-1} \QQ_{\xx_i, c}  \sigma^2
+ \PP_{\xx_i}  &=  \frac{c}{2q}  \left( \II - \frac{2q}{c} \xx_i \xx_i^{\top} \right) \PP_{i-1} \left( \II - \frac{2q}{c} \xx_i \xx_i^{\top} \right) + \xx_i \xx_i^{\top} \\
&\preceq  \frac{c}{2q}  \PP_i.
\end{align*}

Therefore, $\ww_i$ is a $(c/2q, \PP_i) $ sub-Gaussian random vector.

Now suppose that $\ww_{i-1}$ is $(\nu, \alpha, \PP_{i-1})$ sub-exponential, we will prove that $\ww_i$ is $(\nu, \alpha, \PP_{i})$ sub-exponential. Again, by Lemma~\ref{lem:epsilon}, $\epsilon_i$ is $(4\sqrt{Aq}, 2/B)$ sub-exponential.  Therefore, 
\begin{align*}
\E \brs{ e^{ \lambda
     \langle \ww_{i-1}, 
\QQ_{\xx_i, c} \uu  \rangle} \E  \brs{ e^{\lambda \epsilon_i \xx_i^{\top} \uu} | \ww_{i-1}}}
& \leq \E \brs{ e^{ \lambda
     \langle \ww_{i-1}, 
\QQ_{\xx_i, c} \uu \rangle} \cdot e^{8Aq \lambda^2 \|\xx_i^{\top} \uu\|^2  }} 
\end{align*}
for $|\lambda| < \frac{2}{B|\xx_i^{\top} \uu|} = \frac{1}{\alpha \sqrt{\uu^{\top} \xx_i \xx_i^{\top} \uu}}$.
Since $\ww_{i-1}$ is $(\nu, \alpha, \PP_{i-1})$ sub-exponential,
\[
\E \brs{ e^{ \lambda
     \langle \ww_{i-1}, 
\QQ_{\xx_i, c} \uu  \rangle} } 
\leq 
e^{ \frac{ \lambda^2  \nu^2}{2} \br{ \QQ_{\xx_i, c} \uu}^{\top}  \PP_{i-1} \br{ \QQ_{\xx_i, c} \uu}} 
\]
for $|\lambda| < \frac{1}{\alpha \sqrt{ \uu^{\top} \PP_{i-1} \uu}}$.
Note that $\uu^{\top} \PP_{i} \uu$ is greater than both $\uu^{\top} \PP_{i-1} \uu$ and $\uu^{\top} \xx_i \xx_i^{\top} \uu$. We have
\begin{align*}
\E \brs{ e^{ \lambda
     \langle \ww_{i-1}, 
\QQ_{\xx_i, c} \uu  \rangle} \E  \brs{ e^{\lambda \epsilon_i \xx_i^{\top} \uu} | \ww_{i-1}}} 
\leq 
e^{ \frac{ \lambda^2  \nu^2}{2} \br{ \QQ_{\xx_i, c} \uu}^{\top}  \PP_{i-1} \br{ \QQ_{\xx_i, c} \uu}  
+\ {8Aq\lambda^2 } (\uu^{\top} \xx_i \xx_i^{\top} \uu) }
\end{align*}
for all $|\lambda| < \frac{1}{\alpha \sqrt{\uu^{\top} \PP_{i} \uu}}$.
Hence, it is sufficient to show 
\begin{align*}
     \frac{\nu^2}{2} \br{ \QQ_{\xx_i, c} \uu}^{\top} \PP_{i-1} \br{ \QQ_{\xx_i, c} \uu}  
+ 8Aq \uu^{\top} \xx_i \xx_i^{\top} \uu  \leq \frac{\nu^2}{2} \uu^{\top} \PP_i  \uu \, \, \forall \uu\\
 \iff   {\nu^2} \QQ_{\xx_i, c} ^{\top}  \PP_{i-1} \QQ_{\xx_i, c}   
+ 16Aq \xx_i \xx_i^{\top}  \preceq {\nu^2} \PP_i.
\end{align*}
This follows from Proposition~\ref{prop:loewner} by substituting $L \leftarrow 16Aq$ and $C \leftarrow c/2q$ and noting that $\nu^2 = 8Ac = Lc$.

\end{proof}

\begin{lemma}
\label{lem:success-prob}
If $c = \min(1/q, 9.3) \log(2n/\delta)$ then with probability at least $1-\delta$, we have 
\[
 |\ww_{i-1}^{\top} \xx_{i}| < c \text{ for all } i \in [n].
\]

\end{lemma}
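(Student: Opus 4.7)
The plan is to apply the concentration inequalities of Lemma~\ref{lem:subgaussian-bound} and Lemma~\ref{lem:subexp-bound} to the scalar random variable $\ww_{i-1}^{\top}\xx_i$, tuning $c$ so that each failure probability is at most $\delta/n$, and then taking a union bound over $i\in[n]$. The key input is that Lemma~\ref{lem:main}, applied inductively with base case $\ww_0 = 0$, shows that $\ww_{i-1}$ is simultaneously $(\sqrt{c/2q},\PP_{i-1})$ sub-Gaussian and $(\sqrt{8Ac},\alpha,\PP_{i-1})$ sub-exponential, so for each $i$ we may pick whichever tail bound is tighter. Since the covariates are unit vectors, $\xx_i^{\top}\PP_{i-1}\xx_i \leq \|\xx_i\|^2 \leq 1$ — this is essentially all the geometric information that enters.

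On the sub-Gaussian side, Lemma~\ref{lem:subgaussian-bound} with $\uu=\xx_i$ and $t=c$ gives
\[
\Pr\!\br{|\ww_{i-1}^{\top}\xx_i|\geq c} \;\leq\; 2\exp\!\br{-\tfrac{c^2}{2\sigma^2\,\xx_i^{\top}\PP_{i-1}\xx_i}} \;\leq\; 2\exp(-qc),
\]
using $\sigma^2 = c/(2q)$. Setting $c = (1/q)\log(2n/\delta)$ drives the right-hand side to $\delta/n$, which handles the regime of moderate or large $q$.

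For small $q$, the sub-Gaussian bound is weak, so I would switch to Lemma~\ref{lem:subexp-bound}. Since the threshold $\nu^2\sqrt{\xx_i^{\top}\PP_{i-1}\xx_i}/\alpha$ can fall below $c$ when $1/q$ is large, I expect to be in the linear-tail branch, giving
\[
\Pr\!\br{|\ww_{i-1}^{\top}\xx_i|\geq c} \;\leq\; 2\exp\!\br{-\tfrac{c}{2\alpha\sqrt{\xx_i^{\top}\PP_{i-1}\xx_i}}} \;\leq\; 2\exp\!\br{-\tfrac{Bc}{4}},
\]
using $\alpha = 2/B$. Choosing $c = (4/B)\log(2n/\delta)$ and noting $4/B = 4/0.4310 \leq 9.3$ again gives $\delta/n$. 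Taking the smaller of the two admissible choices yields precisely $c = \min(1/q, 9.3)\log(2n/\delta)$, after which the union bound finishes the lemma.

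The one subtlety I would be careful about is that the induction in Lemma~\ref{lem:main} must continue to apply even after a step $j < i$ on which the deterministic branch is triggered (i.e.\ $|\ww_{j-1}^{\top}\xx_j|>c$ and $\epsilon_j=0$): inspecting the update $\ww_j = \QQ_{\xx_j,c}\ww_{j-1}$, this fits the general recursion with $\epsilon_j\equiv 0$, which is trivially sub-Gaussian and sub-exponential, so the MGF bound is preserved unconditionally. The main obstacle is thus not probabilistic but arithmetic: pinning down the constant $9.3$ via $4/B$ and confirming that the second (linear) branch of the sub-exponential tail is indeed the one being invoked when $q$ is small.
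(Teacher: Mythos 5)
Your overall strategy is the same as the paper's: invoke the sub-Gaussian/sub-exponential structure of $\ww_{i-1}$ from Lemma~\ref{lem:main}, apply the corresponding scalar tail bound in the direction $\uu=\xx_i$, tune $c$ so that each step fails with probability at most $\delta/n$, and union bound over $i\in[n]$. The sub-Gaussian case is handled exactly as in the paper. The gap is in the sub-exponential case: you invoke the linear branch of Lemma~\ref{lem:subexp-bound} at $t=c$ without verifying its precondition $t>\nu^2\sqrt{\xx_i^{\top}\PP_{i-1}\xx_i}/\alpha$, and this precondition can fail. With $\nu^2=8Ac$ and $\alpha=2/B$ the threshold equals $4ABc\,\sqrt{\xx_i^{\top}\PP_{i-1}\xx_i}$ with $4AB\approx 1.0004$, so whenever $\xx_i$ lies (nearly) in the span of $\xx_1,\dots,\xx_{i-1}$, i.e.\ $\sqrt{\xx_i^{\top}\PP_{i-1}\xx_i}>1/(4AB)$, you are in the quadratic branch, where the only bound the lemma licenses is the weaker $2\exp\br{-t^2/(2\nu^2\xx_i^{\top}\PP_{i-1}\xx_i)}$; quoting the linear expression there asserts a sharper tail than you actually have. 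Note also that the branch selection has nothing to do with $q$: in this regime $c$, $\nu$ and $\alpha$ are all $q$-free, so your heuristic ``linear branch because $1/q$ is large'' is not the right diagnostic --- the relevant quantity is $\xx_i^{\top}\PP_{i-1}\xx_i$.

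The fix is one line and is exactly what the paper does: bound $P\br{|\ww_{i-1}^{\top}\xx_i|>c}\le P\br{|\ww_{i-1}^{\top}\xx_i|>c\sqrt{\xx_i^{\top}\PP_{i-1}\xx_i}}$ and apply the tail bound at $t=c\sqrt{\xx_i^{\top}\PP_{i-1}\xx_i}$; the branch condition then reads $c\le\nu^2/\alpha=4ABc$, which always holds, so the quadratic branch applies uniformly and yields $2\exp\br{-c^2/(2\nu^2)}=2\exp\br{-c/(16A)}$ with $16A=9.2848\le 9.3$. (Equivalently, check that both exponents $c^2/(2\nu^2)$ and $cB/4$ exceed $\log(2n/\delta)$; the binding constant is $16A$, marginally larger than your $4/B\approx 9.281$, and both fit under $9.3$.) Your closing observation --- that the recursion of Lemma~\ref{lem:main} passes unharmed through steps where the deterministic branch fires, since $\epsilon_j\equiv 0$ is trivially sub-Gaussian and sub-exponential --- is correct and consistent with the paper's setup.
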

\begin{proof}
By definition, $c$ is either equal to $\log(2n/\delta)/q$ or $9.3 \log(2n/\delta)$. We consider these two cases. 

First suppose $c = \log(2n/\delta)/q$. With $\sigma^2 = c/2q$, we have
$$
    c = \log(2n/\delta)/q = \sigma \sqrt{2 \log(2 n/ \delta)}.
$$
By Lemma~\ref{lem:subgaussian-bound}, 
\begin{align*}
 P\br{|\ww_{i-1}^{\top} \xx_{i}| > c} 
 &\leq P\br{|\ww_{i-1}^{\top} \xx_{i}| > c \sqrt{\xx_i^{\top} \PP_{i-1} \xx_i}} \\
 &\leq 2 \exp \br{ -\frac{c^2}{2 \sigma^2 }} \leq \delta/n.
\end{align*}
The result then follows by a union bound over $i \in [n].$

Now suppose $c = 9.3 \log(2n/\delta)$. Note that $A = 0.5803$ and $B = 0.4310$ satisfy $e^x < 1 + x + Ax^2$ for $x < B$. Let $\nu^2 =8Ac$ and $\alpha = 2/B$. We have 
$$
\frac{\nu^2}{\alpha} = \frac{8Ac }{2/B} = {4ABc} > c . 
$$
Therefore, by Lemma~\ref{lem:subexp-bound}, 
\begin{align*}
 P\br{|\ww_{i-1}^{\top} \xx_{i}| > c} 
 &\leq P\br{|\ww_{i-1}^{\top} \xx_{i}| > c \sqrt{\xx_i^{\top} \PP_{i-1} \xx_i}} \\
 &\leq 2 \exp \br{ -\frac{c^2}{2 \nu^2 }}
 = 2 \exp \br{-\frac{c}{16A}} \\
 &< 2 \exp \br{-\frac{c}{9.3}} \leq \delta/n.
\end{align*}
Again, the result follows by union bounding over $i \in [n].$
\end{proof}
Lemma~\ref{lem:success-prob} and Lemma~\ref{lem:main} together prove Theorem~\ref{thm:main}.

\subsection{Proof of Theorem~\ref{thm:multi}}

\begin{proof}[Proof of Theorem~\ref{thm:multi}]
Let $D(\delta)$ be the discrepancy obtained by \texttt{BinaryBalance} as a function of the failure probability $\delta$. We will show that $(2 \log k) D(\delta/k)$ is the corresponding discrepancy obtained by Algorithm~\ref{alg:multi}. 
Theorem~\ref{thm:multi} will then follow from Proposition~\ref{prop:balance}.
Notice that with probability $\delta/k$, each run of \texttt{BinaryBalance} at an internal node of $T$ has discrepancy $D(\delta/k)$. By union bounding over $O(k)$ internal nodes, we have that with probability $1-\delta$, all of the discrepancies are bounded by $D(\delta/k)$.

Assume all the discrepancies at the internal nodes in $T$ are bounded, we show how to bound the discrepancy between any two treatments. Let $l$ and $l'$ be any two leaves in $T$. The goal is to show that 
$$\left\| \frac{\alpha(l')}{\alpha(l') + \alpha(l)}s(l) - \frac{\alpha(l)}{\alpha(l') + \alpha(l)}s(l')\right\|_\infty$$ 
is small. 
First we relate $s(v)$ to $s(v_l)$ and $s(v_r)$ where $v_l, v_r$ are the left and right children of $v$. By definition,
\[ w(v) =  \frac{\alpha(v_r)}{\alpha(v_l) + \alpha(v_r)} s(v_l) - \frac{\alpha(v_l)}{\alpha(v_l) + \alpha(v_r)} s(v_r) \]
and 
\[s(v) = s(v_l) + s(v_r).\]
Therefore, 
\[w(v) = s(v_l) - \frac{\alpha(v_l)}{\alpha(v_l) + \alpha(v_r)} s(v),\]
and 
\[- w(v) = s(v_r) - \frac{\alpha(v_r)}{\alpha(v_l) + \alpha(v_r)} s(v).\]
Hence, both 
\[ \left\|s(v_l) - \frac{\alpha(v_l)}{\alpha(v)} s(v) \right\|_{\infty} 
\hspace{0.5cm} \text{and}  \hspace{0.5cm}
 \left\|s(v_r) - \frac{\alpha(v_r)}{\alpha(v)} s(v) \right\|_{\infty} \]
are bounded by $D(\delta/k)$.

Now consider $v_1, v_2$ and $v_3$ in $T$ such that $v_1$ is a child of $v_2$ and $v_2$ is a child of $v_3$. We have, by triangle inequality,  
\[ \left\|s(v_1) - \frac{\alpha(v_1)}{\alpha(v_3)} s(v_3) \right\|_{\infty} \leq  \left\|s(v_1) - \frac{\alpha(v_1)}{\alpha(v_2)} s(v_2) \right\|_{\infty} + \frac{\alpha(v_1)}{\alpha(v_2)}\left\|s(v_2) - \frac{\alpha(v_2)}{\alpha(v_3)} s(v_3) \right\|_{\infty} \leq \left(1+ \frac{\alpha(v_1)}{\alpha(v_2)}\right) D(\delta/k). \]
Let $l$ be a leaf in $T$ and let $l,v_1,v_2 \ldots r$ be the path from $l$ to the root $r$. Repeatedly applying the above relation along the path gives
\begin{equation}
    \label{eqn:sum-path}
    \left\|s(l) - \frac{\alpha(l)}{\alpha(r)} s(r) \right\|_{\infty} \leq 
    \left( 1 + \frac{\alpha(l)}{\alpha(v_1)} + \frac{\alpha(l)}{\alpha(v_2)} \ldots + \frac{\alpha(l)}{\alpha(r)} \right) D(\delta/k).
\end{equation}
Since there are at most $\log k$ nodes in the path from $l$ to $r$,
\[ \left\|s(l) - \frac{\alpha(l)}{\alpha(r)} s(r) \right\|_{\infty} \leq (\log k) D(\delta/k).\]
Finally, for any two leaves $l$ and $l'$,
\[ \left\|\frac{s(l)/\alpha(l) - s(l') /\alpha(l')}{1/\alpha(l) + 1/\alpha(l') }\right\|_{\infty} \leq \left\|\frac{s(l)/\alpha(l) - s(r) /\alpha(r)}{1/\alpha(l) + 1/\alpha(l') }\right\|_{\infty}  + \left\|\frac{s(l')/\alpha(l') - s(r) /\alpha(r)}{1/\alpha(l) + 1/\alpha(l') }\right\|_{\infty}  \leq (2 \log k) D(\delta/k).\]
\end{proof}

\begin{remark}
If all weights are uniform, the summation in (\ref{eqn:sum-path}) becomes a geometric series and can be bounded by a constant. Therefore, we can remove the factor $\log k$ in Theorem~\ref{thm:multi}.
\end{remark}

\subsection{Other Proofs}

\begin{proof}[Proof of Proposition~\ref{prop:z-dist}]
Let $\boldsymbol{B}=\left[\begin{array}{c}
\sqrt{\phi} \boldsymbol{I} \\
\sqrt{1-\phi} \boldsymbol{X}^{\top}
\end{array}\right]$. 
We have from Threorem~\ref{thm:main} that $\ww_n = \BB \eeta =  \left[\begin{array}{c}
\sqrt{\phi} \eeta \\
\sqrt{1-\phi} \boldsymbol{X}^{\top} \eeta
\end{array}\right] $ is a mean zero $(c/2q, \PP)$ sub-Gaussian random vector, where
\begin{align*}
 \PP
& =  \BB (\BB^{\top} \BB)^{-1} \BB^{\top} \\
& =  
\left[
\begin{array}{cc}
    \phi \QQ &  *\\
    *   & *
\end{array}
\right]. 
\end{align*}
Therefore, by sub-Gaussianity of $\ww_n,$ for any vector $\uu,$ we have 
\[
\E \brs{ \exp\br{\sqrt{\phi}\eeta^{\top} \uu }} \leq 
\exp \br{ \frac{c}{4q} \uu^{\top} \br{\phi \QQ} \uu.}
\]
We therefore have the sub-Gaussian claim. The sub-exponential result follows similarly.
\end{proof}

\begin{proof}[Proof of Proposition~\ref{prop:spectral-bound}]

Suppose $c = \log(2n/\delta)/q.$ We have from Proposition~\ref{prop:z-dist} that $\zz$ is a $(\sqrt{c/2q}, \QQ)$ sub-Gaussian vector.
This implies that 
$\Cov(\zz) \preceq \frac{c}{2q} \QQ.$ 

When $c= 9.3 \log(2n/\delta),$ we have from Proposition~\ref{prop:z-dist} that $\eeta$ is a $(\sqrt{8Ac}, \alpha, \QQ)$ sub-exponential vector. Now, Lemma~\ref{lem:subexp-var} gives that 
\begin{align*}
    \Cov(\zz) \preceq \frac{3}{2} 8Ac \QQ 
    = 12 A c \QQ.
\end{align*}
\end{proof}

\begin{proof}[Proof of Proposition~\ref{prop:balance}]
When $c = \frac{\log(4n/\delta)}{q}$ we have that with probability at least $1-\delta/2,$ $\BB \eeta$ is a $\br{\sqrt{c/2q}, \QQ}$ sub-Gaussian vector. Since $\BB \eeta = \left[\begin{array}{c}
\sqrt{\phi} \eeta \\
\sqrt{1-\phi} \boldsymbol{X}^{\top} \eeta
\end{array}\right]$ and $\QQ \preceq \phi \II,$ we have   
$\sqrt{1-\phi} \sum_i \eta_i \xx_i = \sqrt{1-\phi} \XX^{\top}\eeta$ is a $\br{\sqrt{1/\phi}\sqrt{c/2q}, \II } $ sub-Gaussian random vector. By sub-Gaussian concentration, we have with probability at least $1-\delta/2d$, $|\br{\XX^{\top} \eeta}^{\top} \ee_i| \leq \br{\sqrt{(c/2\phi (1-\phi) q}} \sqrt{4d/\delta}. $ The result follows by a union bound over $\ee_1,...,\ee_d$ and $\|\XX^{\top}\eeta \|_{2} \leq \sqrt{d}\|\XX^{\top}\eeta \|_{\infty}. $

When $c = 9.3 \log(4n /\delta),$ then with probability $1-\delta,$ $ \left[\begin{array}{c}
\sqrt{\phi} \eeta \\
\sqrt{1-\phi} \boldsymbol{X}^{\top} \eeta
\end{array}\right]$ is a $(\sqrt{8Ac}, \alpha, \QQ)$ random vector. Like before, this implies $\sqrt{1-\phi} \boldsymbol{X}^{\top} \eeta$ is a $(\sqrt{8Ac}, \alpha, \phi \II)$ random vector. By sub-exponential concentration, we have
\begin{align*}
P \br{ \left|\sqrt{1-\phi} \br{\XX^{\top} \eeta}^{\top} \ee_i \right| \geq t} 
\leq \left\{ \begin{array}{cc}
2 \exp \br{-t^2 \phi/2 \nu^2} & \text{ when } t \leq \frac{\nu^2}{\alpha \sqrt{\phi}} \\
2 \exp \br{ -t/2 \alpha} & \text{ otherwise }
\end{array} \right .
\end{align*}

Setting $t = \sqrt{ 2 \log (4d \delta) (8Ac)} \leq c \leq  \nu^2/\alpha, $ (when $n \geq d$), we get that with probability at least $1-\delta/2d,$ we have 
\[
|\sqrt{1-\phi} \br{\XX^{\top} \eeta}^{\top} \ee_i| \leq 9.3 \sqrt{ \frac{ \log (4d /\delta) \log (4n/\delta)}{\phi}.}
\]
The rest follows by a union bound.
\end{proof}

\begin{proof}[Proof of Proposition~\ref{prop:ate-concentration}.]
First consider the case when $c = \log(2n/\delta)/q = \sigma \sqrt{2 \log(2 n/ \delta)}$. By Proposition~\ref{prop:z-dist}, $\eeta$ is $(\sigma, \QQ)$ sub-Gaussian with $\sigma = \sqrt{c/2q}$. From Lemma~\ref{lem:subgaussian-bound}, we have
\begin{align*}
    P(|\eeta^{\top}\mmu| > c\sqrt{\mmu^T\QQ\mmu}) \leq 2 \exp \br{ - \frac{c^2}{2\sigma^2}} = \delta / n.
\end{align*}
Now consider $c = 9.3 \log(2n/ \delta)$. By Proposition~\ref{prop:z-dist}, $\eeta$ is $(\nu,\alpha, \QQ)$ sub-exponential with $\nu = \sqrt{8Ac}$. Note that
$$
{\nu^2}/{\alpha} = {8Ac } /{(2/B)} = {4ABc} > c . 
$$
From Lemma~\ref{lem:subexp-bound}, we have
\begin{align*}
    P(|\eeta^{\top}\mmu| > c\sqrt{\mmu^T\QQ\mmu}) \leq 2 \exp \br{ - \frac{c^2}{2\nu^2}} \\
    = 2 \exp \br{ - \frac{c}{16A}}\leq \delta / n.
\end{align*}
The result then follows by a union bound.
\end{proof}

\begin{proof}[Proof of Proposition~\ref{prop:ate-ridge}.]
This follows from Proposition~\ref{prop:spectral-bound} and the proof of Theorem 3 in \cite{harshaw2020balancing}.
\end{proof}

\subsection{Robustness}
\label{sec:robust}
Proposition~\ref{prop:spectral-bound} immediately gives a bound on
$\lambda_{\max}(\Cov(\zz))$ and hence bounds accidental bias.

\begin{remark}[Accidental Bias]
With probability $\geq 1-\delta$ the maximum eigenvalue of $\Cov(\zz)$ satisfies
\[
\lambda_{\max}\br{\Cov(\zz)} \leq  \frac{\sigma^2}{\phi}.
\]
\end{remark}

\section{Simulations}
\label{app:sims}

\subsection{Description}

\begin{table*}[h!]
\centering
\begin{tabular}{rccc}
    \toprule
    \textbf{DGP Name} & $\mathbf{X}$ & $\mathbf{y(0)}$ & $\mathbf{y(a)\; \mathrm{\bf s.t.}\; a \neq 0}$\\
    \midrule
    \textbf{QuickBlockDGP} & $X_{i,k} \sim \mathcal{U}(0,10), \forall k \in \{1,2\}$ &  $\prod_{k=1}^2 X_{k} + \epsilon$ & 1 + y(0) \\
    \midrule
    \textbf{LinearDGP} & $X_{i,k} = \epsilon_{k}, k \in \{1,\dots,4\}$& $\mathbf{X} \beta + \frac{1}{10}\epsilon_{y(0)}$ & 1 + $\mathbf{X} \beta + \frac{1}{10}\epsilon_{y(1)}$\\
    \midrule
    \textbf{LinearDriftDGP} & $X_{i,k} = \frac{i}{N} + \epsilon_{k}, k \in \{1,\dots,4\}$& $\mathbf{X} \beta + \frac{1}{10}\epsilon_{y(0)}$ & 1 + $\mathbf{X} \beta + \frac{1}{10}\epsilon_{y(1)}$\\
    \midrule
    \textbf{LinearSeasonDGP} & $X_{i,k} = \sin(2\pi\frac{i}{N}) + \epsilon_{k}, k \in \{1,\dots,4\}$& $\mathbf{X} \beta + \frac{1}{10}\epsilon_{y(0)}$ & 1 + $\mathbf{X} \beta + \frac{1}{10}\epsilon_{y(1)}$\\
    \midrule
    \textbf{QuadraticDGP} & $X_{i,k} = 2\beta_k - 1, k \in \{1, 2\}$& $\mu_0 + \frac{1}{10}\epsilon_{y(0)}$ & 1 + $\mu_0 + \frac{1}{10}\epsilon_{y(1)}$\\
    &$\mu_0 = X_1 - X_2 + X_1^2 + X_2^2 - 2 X_1 X_2$&& \\
    \midrule
    \textbf{CubicDGP} & $X_{i, k} = 2\beta_k - 1, k \in \{1, 2\}$& $\mu_0 + \frac{1}{10}\epsilon_{y(0)}$ & 1 + $\mu_0 + \frac{1}{10}\epsilon_{y(1)}$\\
    &$\mu_0 = X_1 - X_2 + X_1^2 + X_2^2 - 2 X_1 X_2$&& \\
    &$+ X_1^3 -X_2^3 - 3 X_1^2 X_2 + 3X_1 X_2^2$&&\\
    \midrule
    \textbf{SinusoidalDGP} & $X_{i,k} = 2\beta_k - 1, k \in \{1, 2\}$& $\mu_0 + \frac{1}{10}\epsilon_{y(0)}$ & 1 + $\mu_0 + \frac{1}{10}\epsilon_{y(1)}$\\
    &$\mu_0 = \sin\left(\frac{\pi}{3} + \frac{\pi X_1}{3} - \frac{2 \pi X_2}{3}\right)$&& \\
    &$-6 \sin(\frac{\pi X_1}{3} + \frac{\pi X_2}{4}) +6 \sin(\frac{\pi X_1}{3} + \frac{\pi X_2}{6})$&&\\
    \bottomrule
\end{tabular}
\caption{Data generating processes used in simulations. All $\epsilon$s indicate a standard normal variate and all $\beta$s indicate a standard uniform variate. $i$ indicates a unit's index. Covariate vectors are row-normalized to unit norm, except for the QuickBlock simulation which just normalized relative to the maximum row norm.}
\label{tab:dgps}
\end{table*}

\clearpage

\subsection{Binary Treatments}
\paragraph{Bias.} Figure~\ref{fig:bias} shows that none of the examined methods are biased (but that does not imply that they are robust \citep{efron1971forcing, harshaw2020balancing}).
\begin{figure}
    \centering
    \includegraphics[width=0.475\textwidth]{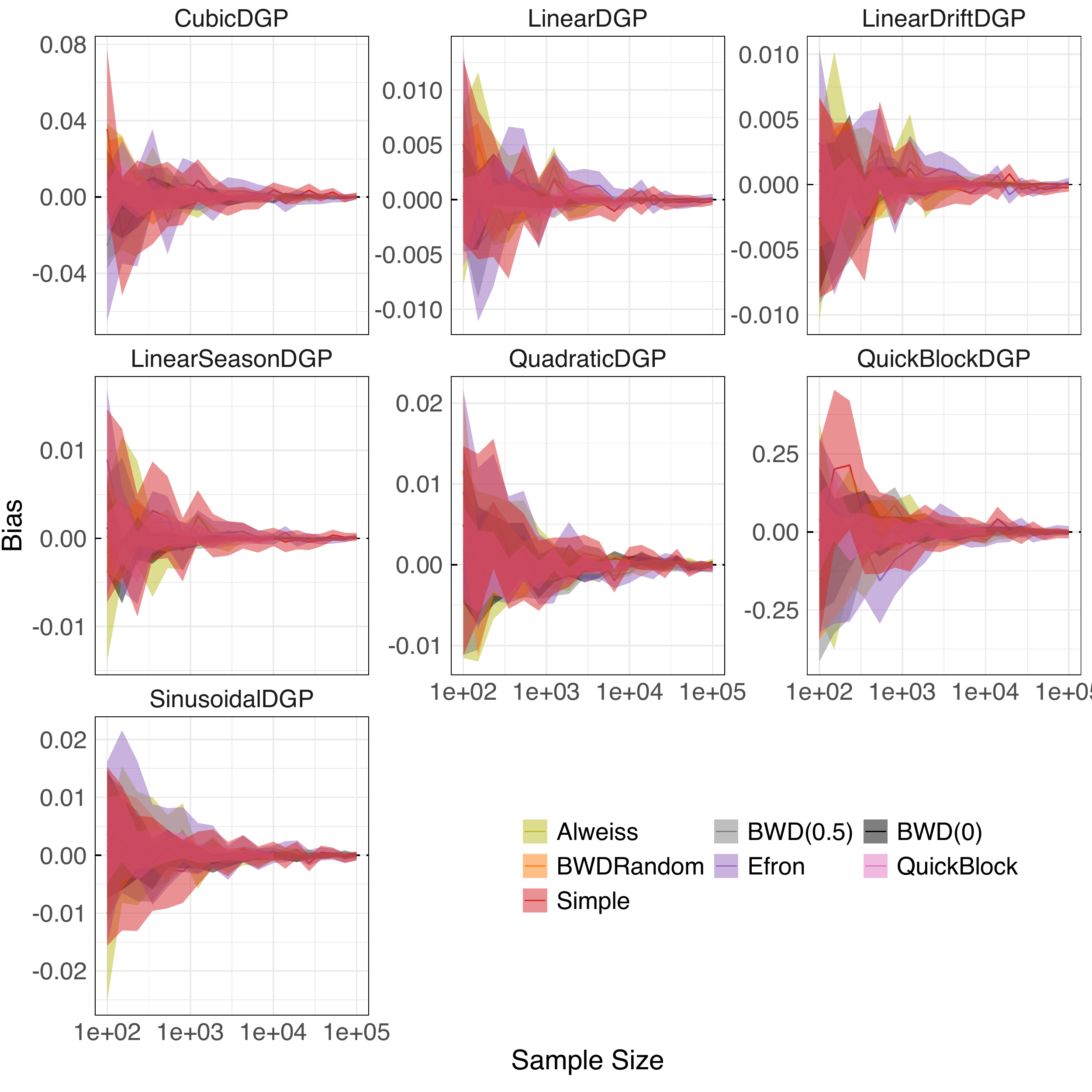}
    \caption{Bias}
    \label{fig:bias}
\end{figure}

\paragraph{Imbalance.} 
To measure linear imbalance, we calculate the $L_2$ norm of the difference in covariate means.
While this is far from the only measure of imbalance, it serves as an effective metric to demonstrate how well various metrics serve to eliminate linear imbalances, a common diagnostic used my experimenters.
We further normalize across sample-size by multiplying by the sample size.
Since all methods are unbiased, this has the effect of showing parametric convergence rates as a flat line in the graph.
Unsurprisingly, methods which directly optimize for linear imbalance perform very well in Figure~\ref{fig:imba}.
Our proposed algorithm has better finite sample imbalance minimization than does the algorithm of \citep{alweiss2020discrepancy} due to the finite sample improvements of \citep{dwivedi2021kernel}.

\begin{figure}
    \centering
    \includegraphics[width=0.45\textwidth]{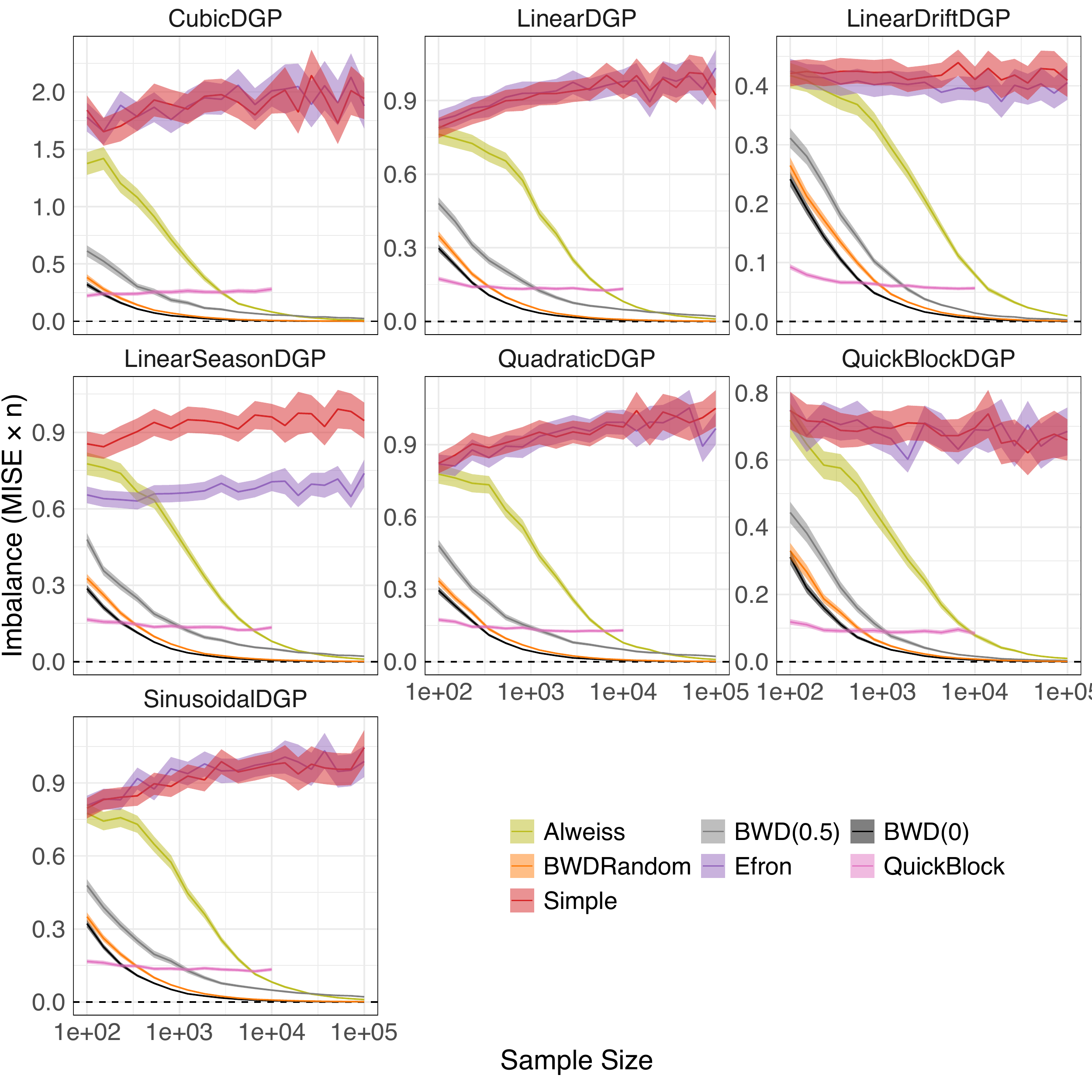}
    \caption{Imbalance. BWD is highly effective at eliminating linear imbalance between groups.}
    \label{fig:imba}
\end{figure}

\subsection{Non-uniform assignment}

\paragraph{MSE.}
Figure~\ref{fig:mse-q} shows the resulting mean squared-error attained by methods which support marginal probabilities not equal to one-half.
All methods perform well, with DM performing effectively on nearly linear processes, and QuickBlock performing slightly better when the true process is highly non-linear.

\begin{figure}
    \centering
    \includegraphics[width=0.475\textwidth]{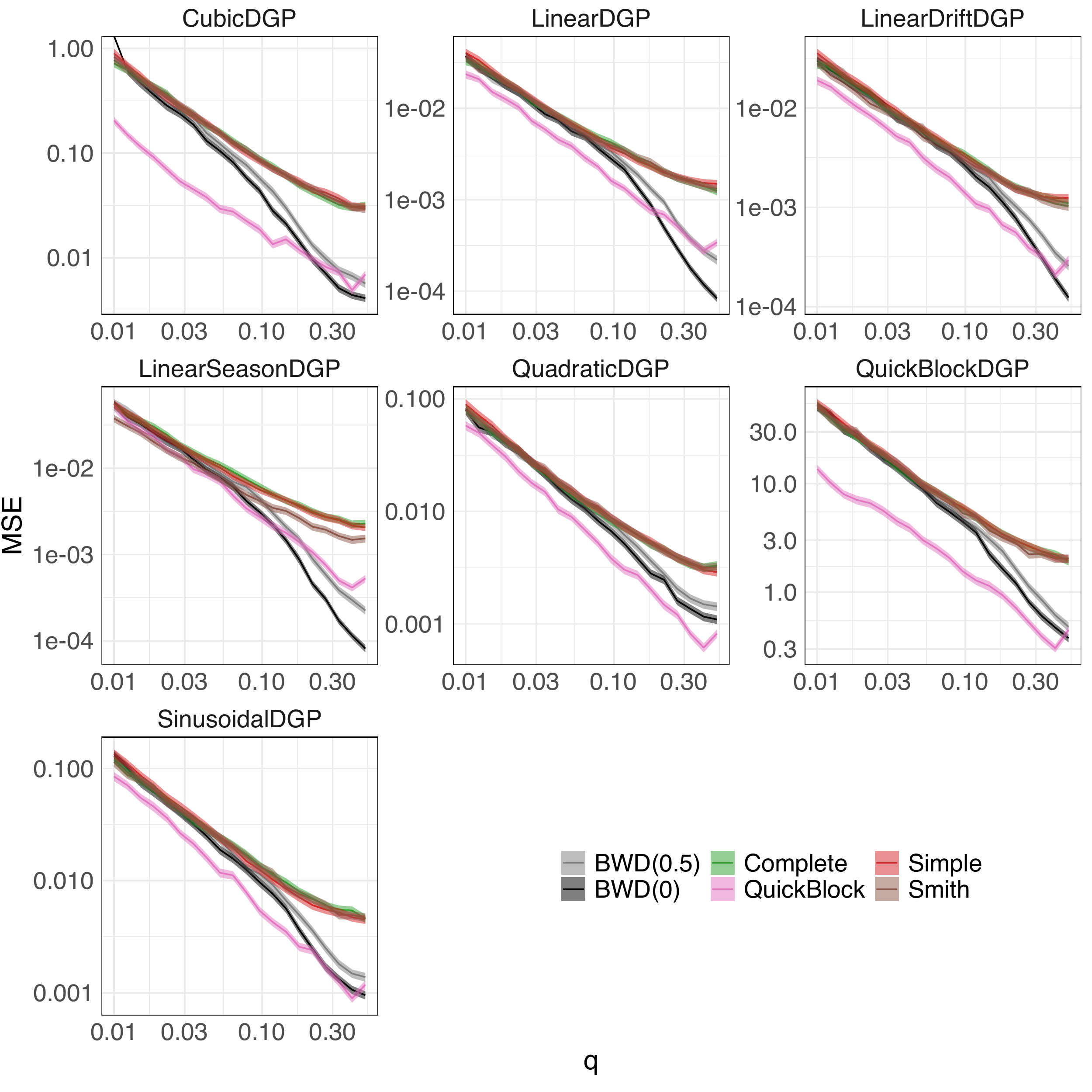}
    \caption{Marginal probability of treatment}
    \label{fig:mse-q}
\end{figure}

\paragraph{Marginal probability.}
The evaluation in Figure~\ref{fig:marginal-q} examines how closely each method hews to the desired marginal probability of treatment.
All methods do a good job of ensuring the appropriate marginal distribution. 

\begin{figure}
    \centering
    \includegraphics[width=0.475\textwidth]{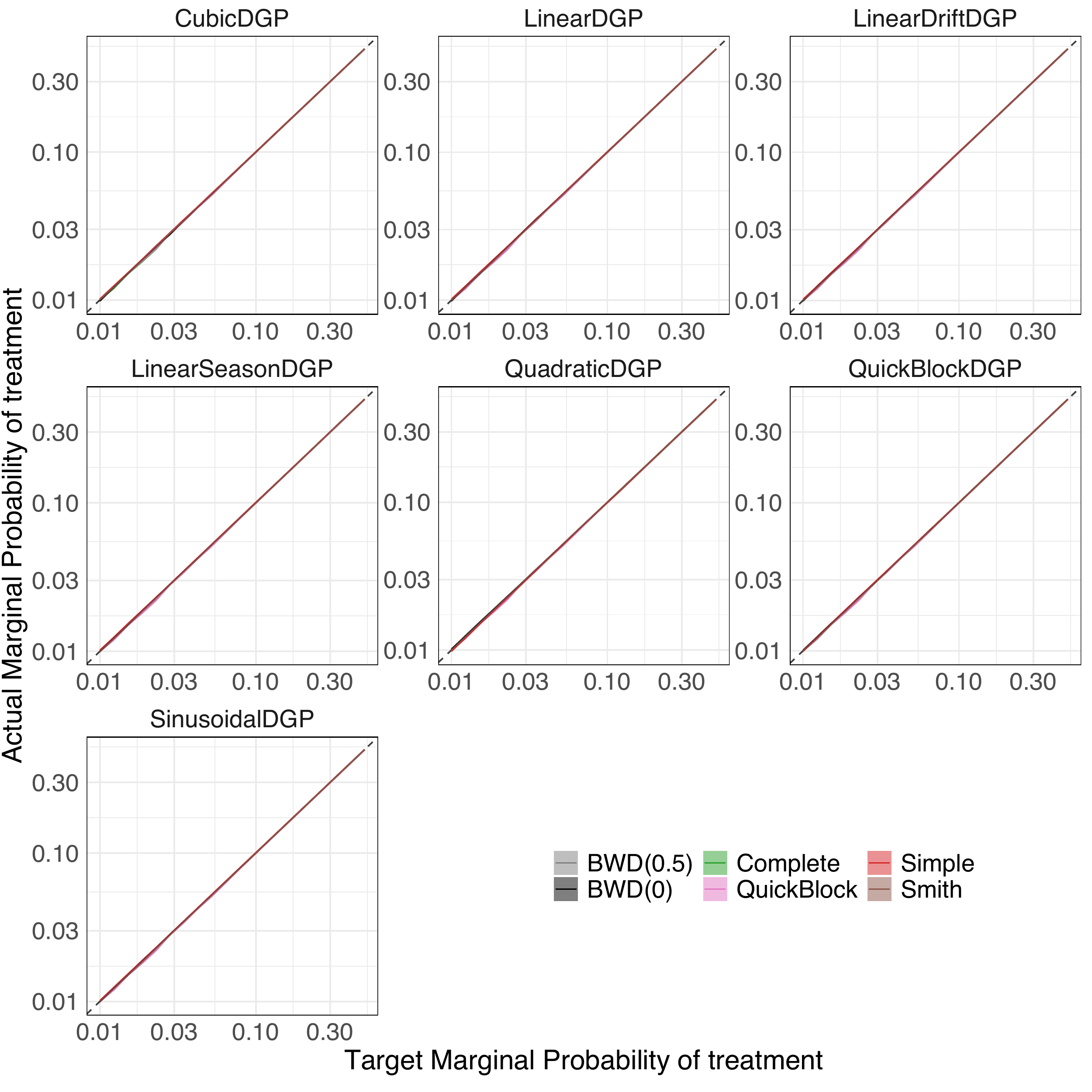}
    \caption{Marginal probability of treatment}
    \label{fig:marginal-q}
\end{figure}

\subsection{Multiple Treatments}

\paragraph{Entropy.}
To ensure that treatment is being assigned with the correct marginal probability, we can measure the normalized entropy of the empirical marginal treatment probabilities.
If the values are perfectly uniform, then the value will be exactly one.
As the normalized entropy decreases, the marginal probabilities are more uneven, indicating a failure to match the desired marginal distribution.
BWD performs very similarly to complete randomization (which almost exactly matches the desired marginal probabilities), slightly out-performing \citep{smith1984sequential}.
Note that while \citet{smith1984sequential} only seeks to optimize the marginal probability of treatment for each unit, BWD additionally balances covariates.

\begin{figure}
    \centering
    \includegraphics[width=0.475\textwidth]{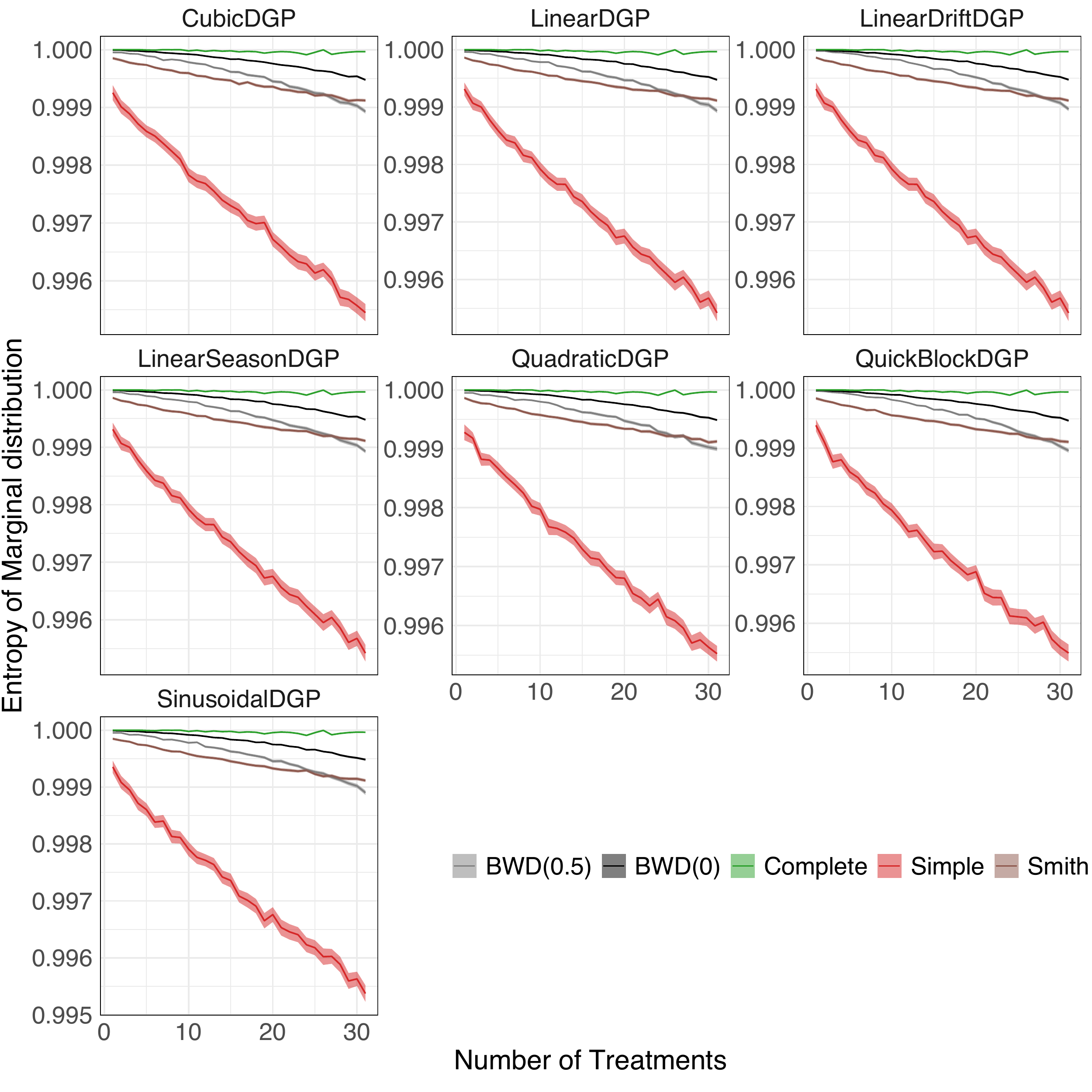}
    \caption{Entropy}
    \label{fig:entropy}
\end{figure}

\end{appendix}

\bibliographystyle{abbrvnat}

\end{document}